\newcommand{\defeq}{\mathrel{\mathop:}=} 
\newcommand{\eqdef}{=\mathrel{\mathop:}}
\newcommand{\F}{\mathcal{F}}
\newcommand{\G}{\mathcal{G}}
\newcommand{\sumsymb}{\Sigma}
\newcommand{\iid}{\mathrm{iid}}
\newcommand{\str}{\mathrm{str}}
\newcommand{\res}{\mathrm{res}}
\newcommand{\sys}{\mathrm{sys}}
\newcommand{\ml}{\mathrm{ML}}
\newcommand{\Halmos}{\qedhere}
\newcommand{\uarg}{\,\cdot\,}
\newcommand{\ud}{\mathrm{d}}
\newcommand{\R}{\mathbb{R}}
\newcommand{\N}{\mathbb{N}}
\renewcommand{\P}{\mathbb{P}}
\newcommand{\E}{\mathbb{E}}
\newcommand{\charfun}[1]{\mathbbm{I}\left\{#1\right\}}
\newcommand{\given}{\,:\,}
\newcommand{\var}{\mathrm{var}}
\newcommand{\cov}{\mathrm{cov}}
\newcommand{\bigmid}{\;\big|\;}
\newtheorem{theorem}{Theorem}
\newtheorem{corollary}[theorem]{Corollary}
\newtheorem{proposition}[theorem]{Proposition}
\newtheorem{lemma}[theorem]{Lemma}
\theoremstyle{definition}
\newtheorem{definition}[theorem]{Definition}
\newtheorem{condition}[theorem]{Condition}
\newtheorem{example}[theorem]{Example}
\theoremstyle{remark}
\newtheorem{remark}[theorem]{Remark}
\title{Unbiased estimators and multilevel Monte Carlo}
\author{Matti Vihola}
\address{Matti Vihola, Department of Mathematics and Statistics, University of
Jyväskylä P.O.Box 35, FI-40014 University of Jyväskylä, Finland}
\email{matti.vihola@iki.fi}
\keywords{
  Efficiency,
  multilevel Monte Carlo, 
  stochastic differential equation,
  stratification,
  unbiased}
\subjclass[2010]{Primary 65C05; secondary 65C30}
\begin{document}

\maketitle
\sloppy
\begin{abstract} 
Multilevel Monte Carlo (MLMC) and unbiased estimators
recently proposed by McLeish ({\em Monte Carlo Methods Appl.}, 2011)
and Rhee and Glynn ({\em Oper. Res.}, 2015) are closely related. This
connection is elaborated by presenting a new general class of unbiased
estimators, which admits previous debiasing schemes as special cases.
New lower variance estimators are proposed, which are stratified
versions of earlier unbiased schemes. Under general conditions,
essentially when MLMC admits the canonical square root Monte Carlo
error rate, the proposed new schemes are shown to be asymptotically as
efficient as MLMC, both in terms of variance and cost. The experiments
demonstrate that the variance reduction provided by the new schemes
can be substantial.  
\end{abstract} 

\section{Introduction} 

Multilevel Monte Carlo (MLMC) methods pioneered by
\cite{heinrich} and \cite{giles-or} are now standard for
estimation of expectations of functionals of processes defined by
stochastic differential equations (SDEs). While the MLMC techniques
have origins in the integral operators \citep{heinrich} and SDEs
\citep{giles-or}, they can be applied also in 
other application domains, where estimates with gradually increasing
accuracy are available; see the recent review by 
\cite{giles-acta15} and references therein.

More recently, the so-called debiasing techniques
\citep{mcleish,rhee-glynn-wsc,rhee-glynn} have attracted a lot of
research activity
\citep[e.g.][]{glynn-rhee-mc,agapiou-roberts-vollmer,strathmann-sejdinovic-girolami,jacob-thiery,lyne-girolami-atchade-strathmann-simpson,walter},
although similar ideas have been suggested much earlier in more
specific contexts
\citep[e.g.][]{glynn-randomized,rychlik-unbiased,rychlik-kernel}. These
techniques are based on similar ideas as MLMC, but instead of optimal
allocation of computational resources for minimising the error, the
primary focus is on providing unbiased estimators. Monte Carlo
inference is straightforward with independent unbiased samples,
allowing to construct confidence intervals in a reliable way
\citep{glynn-whitt-sequential}. Debiasing techniques may also be employed
within a stochastic approximation algorithm \citep{dereich-mueller}. In
particular, in a stochastic gradient descent type algorithm
\citep{robbins-monro}
relevant for instance in maximum likelihood inference
\citep{delyon-lavielle-moulines},
unbiased gradient estimate implies pure
martingale noise, which is supported by a well-established
theory~\citep[e.g.][]{kushner-yin-sa,borkar-sa}. 

The debiasing techniques involve balancing with cost and variance,
which often boils down to similar methods and conditions as
those that are used with MLMC. 
The connection between MLMC and debiasing techniques has been pointed
out earlier at least by \cite{rhee-glynn}, 
\cite{giles-acta15} and \cite{dereich-mueller},
but this connection has not been fully explored yet. The purpose of
this paper is to further clarify the connection of MLMC and debiasing
techniques, within a general framework for unbiased estimators.

Many techniques for unbiased estimation other than those considered
here have been suggested in the literature. For instance, there is a
whole body of literature for `perfect sampling' by Markov chains
\citep{propp-wilson} 
or with certain classes of SDE models \citep{beskos-roberts}; 
see also the recent monograph by
\cite{huber-perfect} and references therein.
Perfect sampling can be used to construct unbiased estimators, but the
problem is generally more prestigious and often harder to implement.
See also the recent article by \cite{jacob-thiery} for discussion about
other related unbiased estimation techniques.

The rest of the paper is organised as follows. The multilevel Monte
Carlo and the previous debiasing methods are presented briefly in
Sections \ref{sec:mlmc} and \ref{sec:unbiased}, respectively. Section \ref{sec:general}
introduces a new general unbiased scheme with an explicit expression
for its variance (Theorem \ref{thm:general}).  The unbiased estimators
suggested by \cite{mcleish} and \cite{rhee-glynn-wsc,rhee-glynn} are
reformulated as specific instances of this scheme, as well as an
obvious `hybrid' scheme with MLMC and unbiased components (Example
\ref{ex:hybrid-mlmc}).

New unbiased estimators are suggested in Section \ref{sec:new}. Two of
the new schemes, termed stratified and residual sampling estimators,
have provably lower variance than simple averages of independent
unbiased estimators (Proposition \ref{prop:consistency}). Because
stratification is a well-known variance reduction technique, these
estimators may be well-known, but they do not seem to be 
recorded in the literature yet. 
The first main finding of this
paper is Theorem \ref{thm:residual-eff} which shows that the
variances of two of the new schemes are asymptotically equal to that of
MLMC under general conditions. This result suggests that unbiasedness 
can often be achieved with asymptotically negligible additional variance.

The expected cost of the methods is discussed in Section
\ref{sec:cost}. The new schemes appear even more appealing after seeing 
that the expected cost of MLMC and
the unbiased schemes are also asymptotically equivalent (Proposition
\ref{prop:mlmc-expected-cost}) and therefore the efficiency of an estimator
with new schemes can be made arbitrarily close to MLMC (Corollary
\ref{cor:batch-complexity}). The limiting variance formulation 
in Theorem \ref{thm:residual-eff} leads into an easily applicable
optimisation criteria for the sampling distribution related to the
new estimators.

Section \ref{sec:stopping} presents a generalisation of the
unbiased scheme, which accomodates further conditioning and dependent
randomisation schemes based on stopping times.
Numerical experiments in Section \ref{sec:numerical} show how the
efficiency bounds predicted by theory are attained in four examples,
three of which were also studied by \cite{rhee-glynn}.  The paper is
concluded by a discussion about the implications of the findings in
Section \ref{sec:conclusions}. Some practical guidelines and possible
topics of further research are discussed as well.

We denote $a\wedge b \defeq \min\{a,b\}$, $a\vee b\defeq
\max\{a,b\}$, $(x)_+\defeq 0 \vee x$ and $\charfun{\uarg}$ stands for the indicator function.
The natural numbers are defined strictly positive
$\N\defeq\{1,2,\ldots\}$, empty sums $\sum_{j=1}^0(\uarg)$ are
taken as zero, and we use the convention $0/0=0$.

\section{Multilevel Monte Carlo} 
\label{sec:mlmc} 

The starting point of MLMC is the existence of estimators $(Y_i)_{i\ge
1}$, which are increasingly accurate approximations of a given
`target' random variable $Y$, whose expectation is of interest. That
is, $(Y_i)_{i\ge 1}$ are random variables with $\E Y_i \to \E Y$, and
the task is to provide a numerical approximation of $\E Y$. The simulation
cost of a single realisation of $Y_i$ increases in $i$, which calls
for optimisation of computational resources.

Such a scenario arises for instance in the context of SDE models, 
commonly applied in option pricing. In such an application,
we might have $Y\defeq f(X_T)$, where 
$f:\R\to \R$ is a given `payoff' function, and 
$X_T$ corresponds to the terminal value of the 
process $(X_t)_{t\in[0,T]}$ solving the SDE
\begin{equation}
          \ud X_t = \mu(X_t) \ud t 
          + \sigma(X_t) \ud B_t, \qquad t\in[0,T], \qquad X_0\equiv
          x_0,
          \label{eq:sde}
\end{equation}
where $(B_t)_{t\ge 0}$ stands for the standard Brownian motion, 
the functions $\mu,\sigma:\R\to\R$ are the drift and diffusion parameters 
of the model, and
$x_0\in\R$ is the initial value. Unless $\mu$ and $\sigma$ are of
certain specific form, the random variable $X_T$ cannot be simulated
exactly, but admits easily implementable approximations based on 
time-discretisation of \eqref{eq:sde}. The commonly applied dyadic uniform 
meshes are defined as follows
\[
    \tau_i \defeq \big(0=t_0^{(i)}< t_1^{(i)} <\cdots < t_{2^i}^{(i)} = T\big)
          \qquad \text{where}
          \qquad t_j^{(i)} \defeq jT2^{-i}\qquad
          \text{for $i\ge 0$.}
\]
The \emph{Milstein} discretisation 
\citep[e.g.][]{kloeden-platen} corresponding to $\tau_i$ 
is defined by letting $X_{0}^{(\tau_i)}=x_0$ and iteratively calculating
\begin{align*}
    X_{k}^{(\tau_i)} &\defeq X_{k-1}^{(\tau_{i})}
    + \mu(X^{(\tau_i)}_{k-1}\big) \Delta t_k^{(i)}
    + \sigma\big(X^{(\tau_i)}_{k-1}\big)
    \Delta B_k^{(i)}
    + \frac{1}{2}\sigma\big(X^{(\tau_i)}_{k-1}\big)
    \sigma'\big(X^{(\tau_i)}_{k-1}\big)\big[ (\Delta B_k^{(i)})^2 -
    \Delta t_k^{(i)} \big],
\end{align*}
for $k=1,\ldots,2^i$, 
where $\Delta t_k^{(i)} \defeq t_k^{(i)}-t_{k-1}^{(i)}$
and $\Delta B_k^{(i)} \defeq B_{t_{k}^{(i)}}-B_{t_{k-1}^{(i)}}$
are independent $N(0,\Delta t_k^{(i)})$ random variables.
The final value of such an iteration provides an approximation of
$X_T$, so we may set 
$Y_{i+1} \defeq f(X_{2^i}^{(\tau_i)})$ for $i\ge 0$.
The cost of simulating a single realisation of $Y_i$ is of order $2^i$, and 
under certain, fairly general, conditions on 
$\mu,\sigma$ and $f$, the mean square error
$\E( Y - Y_{i-1})^2 \le c 2^{-\lambda i}$ with $\lambda>1$
\citep[e.g.][]{kloeden-platen}.

The MLMC is an efficient way to use such approximations 
in order to approximate $\E Y$.
It is based on the following, seemingly trivial observation
\[
    \E Y_m = \sum_{i=1}^m \E (Y_i - Y_{i-1}), \qquad\text{with}\qquad
      Y_0\defeq 0,
\]
which suggests that one may construct an estimate of $\E Y_m$ 
using a separate Monte Carlo approximation of each term $\E(Y_i-Y_{i-1})$:
\[
    Z_\ml \defeq \sum_{i=1}^m \frac{1}{n_i} \sum_{j=1}^{n_i} 
    \Delta_i^{(j)},
\]
where $n_1,\ldots,n_m\in\N$ and 
$\Delta_i^{(j)}$ are independent random variables with
$\E \Delta_i^{(j)} = \E Y_i - \E Y_{i-1}$.
It is direct to check that
then $\E Z_\ml = \E Y_m$. 
There are two key reasons why MLMC is useful:
\begin{enumerate}[(i)]
    \item The random variables $\Delta_i^{(j)}$ are usually
      independent realisations of
      $\Delta_i = Y_i - Y_{i-1}$, where the approximations
      $Y_i$ and $Y_{i-1}$ are \emph{dependent}, such that
      $|\Delta_i|$ is typically small when $i$ is large.
      In the context of the SDE example discussed above, 
      such a coupling arises naturally when using a
      common Brownian path in the discretisations leading to 
      $Y_i$ and $Y_{i-1}$.
    \item It allows to \emph{optimise} the computational effort 
      devoted to each `level' to estimate $\E(Y_i-Y_{i-1})$. The 
      key benefit is that
      fewer samples $n_i$ are often necessary 
      with higher $i$, which leads to lower overall cost.
\end{enumerate}
Theorem 1 of \cite{giles-acta15} quoted below gives the complexity of 
MLMC under the common exponential framework. It assumes that the expected cost
(computational complexity) of each term $\Delta_i^{(j)}$ 
is $\kappa_i$, so that the expected
cost of $Z_\ml$ is $\kappa = \sum_{i=1}^m n_i \kappa_i$.
\begin{theorem} 
\label{thm:giles} 
Suppose $(\Delta_i^{(j)})$ are independent, and there exist positive
$c_1,c_2,c_3,\alpha,\beta,\gamma$ with $\alpha \ge (\beta \wedge
\gamma)/2$, such that for all $i\ge 1$,
\[
    |\E Y - \E Y_i| \le c_1 2^{-\alpha i},
    \qquad
    \var(\Delta_i^{(j)}) \le c_2 2^{-\beta i},
    \qquad\text{and}\qquad
    \kappa_i \le c_3 2^{\gamma i}.
\]
Then, there exists a positive constant $c_4$ such that for any
$\epsilon<1/e$ there are values $m$ and $n_1,\ldots, n_m\in\N$ 
such that the MLMC estimator satisfies the following mean square error
(MSE) bound
\[
    \E(Z_\ml-\E Y)^2 \le \epsilon^{2},
\]
and satisfies the expected cost bound
\[
    \kappa \le 
    \begin{cases}
    c_4 \epsilon^{-2}, & \beta>\gamma, \\
    c_4 \epsilon^{-2}(\log \epsilon)^2, & \beta = \gamma, \\
    c_4 \epsilon^{-2-(\gamma-\beta)/\alpha}, & \beta < \gamma.
    \end{cases}
\]
\end{theorem}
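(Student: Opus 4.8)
The plan is to start from the bias--variance decomposition, using that $Z_\ml$ is unbiased for $\E Y_m$: $\E(Z_\ml-\E Y)^2 = (\E Y_m-\E Y)^2 + \var(Z_\ml)$. First I would fix the number of levels $m$ by choosing it to be the smallest integer for which the weak-error bound gives $c_1^2 2^{-2\alpha m}\le \epsilon^2/2$. Since $\epsilon<1/e$, this yields both $m = O(\log(1/\epsilon))$ and the two-sided estimate $2^{\gamma m} = O(\epsilon^{-\gamma/\alpha})$, the latter being what ties the truncation level to the cost rate. With this $m$ the squared bias is at most $\epsilon^2/2$, so it remains to make $\var(Z_\ml)\le \epsilon^2/2$ while controlling $\kappa$.

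Second, writing $V_i\defeq c_2 2^{-\beta i}$ and $K_i\defeq c_3 2^{\gamma i}$, one has $\var(Z_\ml)=\sum_{i=1}^m n_i^{-1}\var(\Delta_i^{(j)})\le \sum_{i=1}^m n_i^{-1}V_i$ and, since $\kappa_i\le K_i$, $\kappa=\sum_{i=1}^m n_i\kappa_i\le\sum_{i=1}^m n_iK_i$. I would then use the classical Lagrange/Cauchy--Schwarz allocation $n_i=\big\lceil 2\epsilon^{-2}\sqrt{V_i/K_i}\,\sum_{j=1}^m\sqrt{V_jK_j}\,\big\rceil$. A short telescoping computation shows $\sum_i n_i^{-1}V_i\le\epsilon^2/2$, hence the total MSE is at most $\epsilon^2$; dropping the ceilings in the cost gives $\kappa\le\sum_{i=1}^m K_i + 2\epsilon^{-2}\big(\sum_{i=1}^m\sqrt{V_iK_i}\big)^2$.

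Third, I would estimate the two terms. The geometric sum $\sum_{i=1}^m\sqrt{V_iK_i}=\sqrt{c_2c_3}\sum_{i=1}^m 2^{(\gamma-\beta)i/2}$ is bounded by a constant when $\beta>\gamma$, by a multiple of $m$ when $\beta=\gamma$, and by a multiple of $2^{(\gamma-\beta)m/2}$ when $\beta<\gamma$; substituting $m=O(\log(1/\epsilon))$ and $2^m=O(\epsilon^{-1/\alpha})$ turns $2\epsilon^{-2}(\cdot)^2$ into the three stated rates $\epsilon^{-2}$, $\epsilon^{-2}(\log\epsilon)^2$ and $\epsilon^{-2-(\gamma-\beta)/\alpha}$. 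The remaining term $\sum_{i=1}^m K_i = O(2^{\gamma m}) = O(\epsilon^{-\gamma/\alpha})$ is the cost of running at least one sample at each level, and this is precisely where the hypothesis $\alpha\ge(\beta\wedge\gamma)/2$ is used: it guarantees $\gamma/\alpha$ does not exceed the exponent produced by the first term in each of the three regimes (equivalently $\gamma\le 2\alpha$ when $\gamma\le\beta$ and $\beta\le 2\alpha$ when $\beta<\gamma$), so this contribution is absorbed.

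The step I expect to require the most care is the bookkeeping around the ceilings: verifying that replacing $n_i$ by $\lceil n_i\rceil$ only adds the $\sum_i K_i$ term to the cost, that $n_i\ge 1$ holds so the estimator is well defined on every level, and that the multiplicative constants coming from $c_1,c_2,c_3$ and the choice of $m$ combine into the single constant $c_4$ uniformly over $\epsilon<1/e$. By contrast, once the allocation is fixed, the three-case analysis is routine geometric-series estimation.
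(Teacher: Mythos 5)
Your proposal is correct: this statement is quoted in the paper without proof (it is Theorem~1 of Giles, \emph{Acta Numer.}~2015), and your argument is essentially the canonical proof from that source --- bias--variance split, geometric choice of $m$ giving $2^{m}=O(\epsilon^{-1/\alpha})$, the Lagrange allocation $n_i\propto\sqrt{V_i/K_i}$ with ceilings contributing the $\sum_i K_i=O(\epsilon^{-\gamma/\alpha})$ term, and the three-regime estimate of $\big(\sum_i\sqrt{V_iK_i}\big)^2$, with the hypothesis $\alpha\ge(\beta\wedge\gamma)/2$ used exactly where you place it. (The only cosmetic quibble is that the bound $\sum_i V_i/n_i\le\epsilon^2/2$ is a direct cancellation rather than a telescoping sum.)
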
 

The SDE application discussed above often satisfies the conditions of
Theorem \ref{thm:giles} with $\beta>\gamma$ \citep{kloeden-platen}. In a
multidimensional SDE setting, the antithetic truncated Milstein
\citep{giles-szpruch} often lead to $\beta>\gamma$ as well. This
highlights why MLMC has become so popular---it is often possible to
attain a `canonical' rate $\epsilon^{-2}$, equivalent to a square
root error rate $\kappa^{-1/2}$, despite the bias, using simple
standard discretisation methods. Crude Monte Carlo, that is, taking a
fixed level $m$ and averaging independent realisations of $Y_m$, leads
to worse rates.  The same canonical square root error rate can be attained
with similar assumptions using the debiasing techniques which are
discussed next.


\section{Debiasing techniques}
\label{sec:unbiased} 

As with MLMC, assume that $(Y_i)_{i\ge 1}$ and $Y$ are integrable random
variables with $\E Y_i\to \E Y$ and $Y_0\equiv 0$, then
\[
    \E Y = \lim_{n\to\infty} \E Y_n = \lim_{n\to\infty} \sum_{i=1}^n
    \E(Y_i-Y_{i-1}).
\]
The fundamental observation behind the unbiased schemes is that one may
employ randomisation to pick a finite number of terms of the series
to construct estimators with expectation $\E Y$.

The two results below due to \cite{rhee-glynn} propose
two such estimators, along with expressions for their second moments.
Proofs of Theorems \ref{thm:single-term} and
\ref{thm:sum} are shown to follow as a consequence of Theorem \ref{thm:general}
in Section \ref{sec:general} (Examples \ref{ex:single} and \ref{ex:sum}).
\begin{condition} 
    \label{cond:single-term} 
Suppose $(\Delta_i)_{i\ge 1}$ are independent random variables with $\E\Delta_i = \E
Y_i - \E Y_{i-1}$, and $(p_i)_{i\ge 1}$ is a probability distribution
such that $p_i>0$ for all $i\in\N$ and
\[
    \sum_{i\ge 1}
    \frac{\E \Delta_i^2}{p_i}<\infty.
\]
\end{condition} 

\begin{theorem}[Single term estimator]
    \label{thm:single-term} 
Suppose $(\Delta_i)_{i\ge 1}$ and $(p_i)_{i\ge 1}$ satisfy Condition
\ref{cond:single-term}, and $R\in\N$ is a random variable independent
of $(\Delta_i)_{i\ge 1}$ with
$\P(R=i)=p_i$,
then the \emph{single-term estimator} 
\[
    Z^{(1)} \defeq \frac{\Delta_R}{p_R}\qquad\text{satisfies}\qquad
    \E Z^{(1)}  = \E Y \quad\text{and}\quad
    \E (Z^{(1)})^2 = \sum_{i\ge 1}
    \frac{\E \Delta_i^2}{p_i}. 
\]
\end{theorem}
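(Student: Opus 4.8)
The plan is to establish both identities by conditioning on the randomising index $R$ and using its independence from $(\Delta_i)_{i\ge1}$. The second-moment identity involves only nonnegative quantities and is essentially immediate; the mean identity requires an interchange of expectation and infinite summation, and the only real work is to justify this, which I would do with a double application of the Cauchy--Schwarz inequality.

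For the second moment, I would condition on $R$ and use independence of $R$ from $(\Delta_i)_{i\ge1}$ to write
\[
  \E (Z^{(1)})^2 = \sum_{i\ge1} p_i\,\E\!\left[\frac{\Delta_i^2}{p_i^2}\right] = \sum_{i\ge1} \frac{\E\Delta_i^2}{p_i},
\]
where Tonelli's theorem legitimises the term-by-term evaluation since every summand is nonnegative. By Condition~\ref{cond:single-term} this series is finite, so in particular $Z^{(1)}$ is square-integrable, hence integrable.

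For the mean, integrability of $Z^{(1)}$ lets me condition on $R$ again to get $\E Z^{(1)} = \sum_{i\ge1} p_i\,\E[\Delta_i/p_i] = \sum_{i\ge1}\E\Delta_i$, provided the swap of $\E$ and $\sum$ is valid; by Fubini it suffices that $\sum_{i\ge1}\E|\Delta_i|<\infty$. This holds because $\E|\Delta_i|\le(\E\Delta_i^2)^{1/2}=p_i^{1/2}(\E\Delta_i^2/p_i)^{1/2}$, and hence, by Cauchy--Schwarz over $i$,
\[
  \sum_{i\ge1}\E|\Delta_i| \le \Big(\sum_{i\ge1}p_i\Big)^{1/2}\Big(\sum_{i\ge1}\frac{\E\Delta_i^2}{p_i}\Big)^{1/2} < \infty.
\]
Substituting $\E\Delta_i=\E Y_i-\E Y_{i-1}$ and telescoping, $\sum_{i=1}^n\E\Delta_i=\E Y_n-\E Y_0=\E Y_n\to\E Y$, so $\E Z^{(1)}=\E Y$.

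The proof is short and I do not expect a serious obstacle; the one point demanding attention is the order of summation, i.e.\ verifying absolute convergence before passing the expectation through the series and before concluding that the series telescopes to $\E Y$ (and not merely to $\lim_n\E Y_n$, which is exactly where the hypothesis $\E Y_n\to\E Y$ enters). I would also remark that the uncorrelatedness of the $\Delta_i$ assumed in Condition~\ref{cond:single-term} plays no role here---only the identity $\E\Delta_i=\E Y_i-\E Y_{i-1}$ is used for the single-term estimator---and is presumably needed instead for the sum estimator of Theorem~\ref{thm:sum}.
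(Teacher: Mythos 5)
Your proof is correct, and it takes a genuinely different route from the paper's. The paper does not prove Theorem~\ref{thm:single-term} directly: it derives it as the $n=1$ case of the general scheme of Theorem~\ref{thm:general}, worked out in Example~\ref{ex:single} with multinomial level counts $N_i=\sum_j\charfun{R^{(j)}=i}$. That route requires computing $\cov(N_i,N_k)$, assembling the quantity $v_{\ell,m}$, and invoking the $L^2$-Cauchy argument of Appendix~\ref{app:proof-general} for the partial sums over levels; unbiasedness and the second moment then drop out of the limit $\lim_m v_{0,m}$. Your argument instead conditions on $R$ directly: Tonelli gives the second-moment identity for free, square-integrability follows from Condition~\ref{cond:single-term}, and the only delicate point --- interchanging $\E$ and $\sum_i$ in the mean --- is settled by your Cauchy--Schwarz bound $\sum_i\E|\Delta_i|\le(\sum_i p_i)^{1/2}(\sum_i\E\Delta_i^2/p_i)^{1/2}$, after which the telescoping and the hypothesis $\E Y_n\to\E Y$ finish the job. (One could even skip the Cauchy--Schwarz step: once $Z^{(1)}\in L^1$, the partial sums $\sum_{i\le n}\charfun{R=i}\Delta_i/p_i$ are dominated by $|Z^{(1)}|$, so dominated convergence already gives $\E Z^{(1)}=\lim_n\E Y_n$; your version additionally yields absolute convergence of $\sum_i\E\Delta_i$.) What your approach buys is brevity and transparency for this one estimator; what the paper's buys is a single framework that simultaneously handles the sum estimators, the stratified and residual variants, and the variance of averages of $n$ replicates, which is where the uncorrelatedness (indeed independence, in Example~\ref{ex:single}) of the $\Delta_i$ actually gets used --- your closing remark that it is vacuous for the single-term case itself is accurate.
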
 

\cite{mcleish} first suggested the following estimators, but with 
different conditions and different expression for the variance.

\begin{condition} 
    \label{cond:sum} 
Suppose $(\Delta_i)_{i\ge 1}$ are 
random variables with $\E\Delta_i = \E
Y_i - \E Y_{i-1}$, and $(p_i)_{i\ge 1}$ is a probability distribution
such that $\tilde{p}_i \defeq \sum_{j\ge i} p_j>0$ for all $i\in\N$
and either one of the following hold:
\begin{enumerate}[(i)]
\item 
  \label{item:coupled-sum}
  $\Delta_i = Y_i - Y_{i-1}$ and
  \[
    \sum_{i\ge 1} \frac{\E(Y_{i-1} - Y)^2}{\tilde{p}_i} < \infty.
  \]
\item
  \label{item:independent-sum}
  $(\Delta_i)_{i\ge 1}$ are independent and 
\begin{equation*}
    \sum_{i\ge 1} \frac{\var(\Delta_i) + (\E Y - \E
Y_{i-1})^2}{\tilde{p}_i}
<\infty.
\end{equation*}  
\end{enumerate}
\end{condition} 

\begin{theorem}[Sum estimator] 
    \label{thm:sum} 
Suppose $(\Delta_i)_{i\ge 1}$ and $(p_i)_{i\ge 1}$ satisfy Condition
\ref{cond:sum} and $R\in \N$ is a random variable independent of
$(\Delta_i)_{i\ge 1}$ with $\P(R=i)=p_i$, then
the \emph{sum estimator}
\[
    Z_\sumsymb^{(1)} \defeq \sum_{i=1}^R
    \frac{\Delta_i}{\tilde{p}_i}
\]
satisfies $\E Z_\sumsymb^{(1)} = \E Y$. In case of Condition
\ref{cond:sum} \eqref{item:coupled-sum},
\[
    \E (Z_\sumsymb^{(1)}) ^2 = \sum_{i\ge 1} 
    \frac{\E (Y_{i-1} - Y)^2  - \E( Y_i -
      Y)^2}{\tilde{p}_i},
\]
and, in case of Condition
\ref{cond:sum}
\eqref{item:independent-sum},
\[
    \E (Z_\sumsymb^{(1)}) ^2 = \sum_{i\ge 1}
      \frac{\var(\Delta_i) + (\E Y - \E Y_{i-1})^2 -
      (\E Y - \E Y_i)^2}{\tilde{p}_i}.
\]
\end{theorem}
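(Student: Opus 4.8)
The statement is recovered in Section~\ref{sec:general} from the general scheme, but a direct argument runs as follows. Since $\P(R<\infty)=1$, one may write $Z_\sumsymb^{(1)}=\sum_{i\ge1}\tilde p_i^{-1}\Delta_i\charfun{i\le R}$ as an almost surely finite sum, and introduce the truncations $\tilde Z_N\defeq\sum_{i=1}^{N}\tilde p_i^{-1}\Delta_i\charfun{i\le R}$, for which $\tilde Z_N\to Z_\sumsymb^{(1)}$ almost surely. Because $R$ is independent of $(\Delta_i)_{i\ge1}$ and $\P(R\ge k)=\tilde p_k$, each $\tilde Z_N$ is integrable with $\E\tilde Z_N=\sum_{i=1}^N\E\Delta_i=\E Y_N\to\E Y$, and, using the elementary identity $\tilde p_{i\vee j}/(\tilde p_i\tilde p_j)=\tilde p_{i\wedge j}^{-1}$, one gets the finite expressions
\[
  \E[\tilde Z_N\tilde Z_M]=\sum_{i=1}^{M}\sum_{j=1}^{N}\frac{\E[\Delta_i\Delta_j]}{\tilde p_{i\wedge j}}
  \quad(N\ge M),
  \qquad
  \E\tilde Z_N^2=\sum_{i=1}^{N}\frac{1}{\tilde p_i}\,\E\Big[\Delta_i\Big(\Delta_i+2\sum_{j=i+1}^{N}\Delta_j\Big)\Big].
\]

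The computational core is a telescoping identity. In case~\eqref{item:coupled-sum} one has $\sum_{j=i+1}^{N}\Delta_j=Y_N-Y_i$, hence
\[
  \Delta_i\Big(\Delta_i+2\sum_{j=i+1}^{N}\Delta_j\Big)
  =(Y_i-Y_{i-1})(2Y_N-Y_i-Y_{i-1})
  =(Y_N-Y_{i-1})^2-(Y_N-Y_i)^2,
\]
so $\E\tilde Z_N^2=\sum_{i=1}^{N}\tilde p_i^{-1}\bigl[\E(Y_N-Y_{i-1})^2-\E(Y_N-Y_i)^2\bigr]$, with the same expression for $\E(\tilde Z_N-\tilde Z_M)^2$ summed over $M<i\le N$. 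In case~\eqref{item:independent-sum}, uncorrelatedness reduces the cross terms to products of means, and writing $\E\Delta_i=(\E Y_N-\E Y_{i-1})-(\E Y_N-\E Y_i)$ the same algebra yields $\E\tilde Z_N^2=\sum_{i=1}^{N}\tilde p_i^{-1}\bigl[\var(\Delta_i)+(\E Y_N-\E Y_{i-1})^2-(\E Y_N-\E Y_i)^2\bigr]$.

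It remains to pass to the limit $N\to\infty$. Abel summation recasts the finite-$N$ second-moment formula as $\E Y_N^2+\sum_{i=1}^{N-1}(\tilde p_{i+1}^{-1}-\tilde p_i^{-1})\E(Y_N-Y_i)^2$ in case~\eqref{item:coupled-sum}, and analogously (with $(\E Y_N)^2$, $(\E Y_N-\E Y_i)^2$, and an extra term $\sum_i\tilde p_i^{-1}\var(\Delta_i)$) in case~\eqref{item:independent-sum} --- a sum of nonnegative terms. Condition~\ref{cond:sum} makes $\E(Y-Y_{i-1})^2/\tilde p_i$, respectively $(\E Y-\E Y_{i-1})^2/\tilde p_i$, the general term of a convergent series, hence $o(1)$; combined with the implied $L^2$ convergence $Y_N\to Y$ in case~\eqref{item:coupled-sum}, this allows one to bound $\E(\tilde Z_N-\tilde Z_M)^2$ through $(a\pm b)^2\le 2a^2+2b^2$ and show it tends to $0$ as $M,N\to\infty$, so that $\tilde Z_N\to Z_\sumsymb^{(1)}$ in $L^2$; the same estimates identify $\lim_N\E\tilde Z_N^2$ with the stated series, which is absolutely convergent by Condition~\ref{cond:sum} and in which the boundary term $\E(Y-Y_N)^2/\tilde p_N$ vanishes. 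The $L^2$ convergence then gives $\E Z_\sumsymb^{(1)}=\lim_N\E Y_N=\E Y$ and $\E(Z_\sumsymb^{(1)})^2=\lim_N\E\tilde Z_N^2$, completing the proof. I expect the main obstacle to be precisely this interchange of limits: the finite-$N$ identities carry $Y_N$ rather than $Y$, and termwise estimates of $\sum_i\tilde p_i^{-1}(\cdots)$ are not uniformly summable in $N$; the Abel rearrangement into nonnegative terms, together with the vanishing boundary term guaranteed by Condition~\ref{cond:sum}, is what makes the passage to the limit legitimate.
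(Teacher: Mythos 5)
Your proof is correct, and it reaches the result by a route that differs in two respects from the paper's. First, the paper does not prove Theorem~\ref{thm:sum} directly: it derives it as the $n=1$ case of the general scheme of Theorem~\ref{thm:general}, instantiated in Example~\ref{ex:sum} with $\tilde N_i=\sum_{j=1}^n\charfun{R^{(j)}\ge i}$; your direct computation with $Z_\sumsymb^{(1)}=\sum_{i\ge1}\tilde p_i^{-1}\Delta_i\charfun{i\le R}$ reproduces exactly the same second-moment identities (the telescoping $\Delta_i^2+2\Delta_i(Y_N-Y_i)=(Y_{i-1}-Y_N)^2-(Y_i-Y_N)^2$ is the one used in Example~\ref{ex:sum}), but is self-contained and does not deliver the $n>1$ variance formulas as a by-product. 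Second, and more substantively, the passage to the limit is handled differently: the paper invokes the subsequence Lemma~\ref{lem:subsequence} (the Rhee--Glynn device) to get $\E(Y_{i-1}-Y_{m_{k+j}})^2\le 4\,\E(Y_{i-1}-Y)^2$ and hence an $L^2$-Cauchy bound along a subsequence, which suffices because $\tilde Z_N\to Z_\sumsymb^{(1)}$ almost surely. Your Abel summation replaces this entirely: after rearrangement the increment $\E(\tilde Z_N-\tilde Z_M)^2$ becomes $\tilde p_{M+1}^{-1}\E(Y_N-Y_M)^2+\sum_{i=M+1}^{N-1}(\tilde p_{i+1}^{-1}-\tilde p_i^{-1})\E(Y_N-Y_i)^2$ with nonnegative coefficients whose partial sums telescope to at most $\tilde p_N^{-1}$, so that after the bound $\E(Y_N-Y_i)^2\le 2\E(Y-Y_i)^2+2\E(Y-Y_N)^2$ the only dangerous contribution is $\E(Y-Y_N)^2/\tilde p_N$, which vanishes precisely because it is (up to the monotone index shift $\tilde p_N\ge\tilde p_{N+1}$) the general term of the convergent series in Condition~\ref{cond:sum}~\eqref{item:coupled-sum}. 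This yields Cauchyness along the full sequence $N\to\infty$, which is slightly stronger than what the paper establishes, and the final identification of $\lim_N\E\tilde Z_N^2$ with the stated series via a standard uniform-tail/double-limit argument is then routine. Both arguments are valid; the paper's buys generality (one proof covering all the randomisation schemes), yours buys a cleaner and subsequence-free limit argument for this particular estimator.
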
 

\begin{remark} 
We follow \cite{rhee-glynn} and call
the estimator of Theorem \ref{thm:sum} 
\emph{coupled sum} in case of
Condition \ref{cond:sum} \eqref{item:coupled-sum} 
and \emph{independent sum} in case of 
Condition \ref{cond:sum} \eqref{item:independent-sum}.
Note that Condition \ref{cond:sum} \eqref{item:independent-sum}
is equivalent to the assumption of \cite[Theorem 5(a)]{rhee-glynn},
because $\tilde{p}_i$ is non-increasing, and so
\[
  \sum_{i\ge 1} \frac{(\E Y_i - \E Y_{i-1})^2}{\tilde{p}_i} 
  \le 2 \sum_{i \ge 1} \frac{(\E Y_i - \E Y)^2 + (\E Y_{i-1} - \E Y)^2}{\tilde{p}_i}
  \le 4 \sum_{i \ge 1} \frac{(\E Y_{i-1} - \E Y)^2}{\tilde{p}_i}.
\]
\end{remark} 

Let us briefly return to the case where $Y_i=f(X_{2^i}^{(\tau_i)})$ and
$X_{2^i}^{(\tau_i)}$ corresponds to the solution of a time-discretised
SDE with a mesh of $2^i$ points as discussed in Section \ref{sec:mlmc}.
Many approximation schemes admit a weak rate $\alpha>1/2$,
which implies that $|\E Y_i - \E Y_{i-1}| \le |\E Y - \E Y_{i-1}|
+ |\E Y - \E Y_i|
\le c 2^{-\alpha i}$ \citep[cf.][]{kloeden-platen}.  
If additionally $\var(\Delta_i) \le c
2^{-\beta i}$ with $\beta>1$, often satisfied for instance by the
Milstein scheme \citep{kloeden-platen,jentzen-kloeden-neuenkirch} or in
the multivariate case by the antithetic truncated Milstein scheme
\citep{giles-szpruch}, then taking $p_i \propto 2^{-\xi i}$, where
$\xi\in\big(1,2\alpha\wedge\beta\big)$ satisfies Conditions
\ref{cond:single-term} and \ref{cond:sum}
\eqref{item:independent-sum}. Because the cost of level $i$ is 
of order $2^i$, this sampling scheme admits a 
finite expected cost (see Section \ref{sec:cost} for details).
This shows that unbiased estimators with finite variance and finite
expected cost can be obtained with the same conditions
under which MLMC admits the canonical error rate.
In case of correlated $(\Delta_i)_{i\ge 1}$ in 
Condition \ref{cond:sum} \eqref{item:coupled-sum}, 
a slightly more stringent condition about the strong rate
$\E( Y - Y_{i-1})^2 \le c 2^{-\lambda i}$ with $\lambda>1$ is required.


\section{General unbiased scheme} 
\label{sec:general} 

\cite{giles-acta15} pointed out that averaging $n$ independent
single term estimators $Z^{(1)}$ introduced in Theorem
\ref{thm:single-term} corresponds to an estimator of the form
\[
    \sum_{i=1}^{\infty} \frac{1}{n p_i} \sum_{j=1}^{N_i}
    \big(Y_i^{(j)}-Y_{i-1}^{(j)}),
\]
where $N_i$ is the number of samples from level $i$, and 
the expectation of $N_i$ is $np_i$. This shows a close connection with
MLMC, which corresponds to taking $N_i\equiv n_i = np_i$ (up to some
level $m$). Inspired by this remark, and the techniques by 
\cite{rhee-glynn}, consider the following general unbiased
scheme, with an expression for its variance. 
\begin{theorem} 
    \label{thm:general} 
Suppose $(Y_i)_{i\ge 1}$ and $Y$ are integrable random variables
such that $\E Y_i \to \E Y$, and let $(\Delta_i)_{i\ge 1}$ be square
integrable random variables such that $\E\Delta_i =
\E Y_i - \E Y_{i-1}$ for all $i\ge 1$, with $\E Y_0 \defeq  0$.
Assume $(\Delta_i^{(j)})_{i\ge 1}$ for $j\ge 1$ are independent realisations of
the process $(\Delta_i)_{i\ge 1}$.
Suppose $N_i$ are non-negative square integrable random integers
independent of $(\Delta_i^{(j)})_{i,j}$ and with $\E N_i>0$ for all $i\ge 1$.
Define for $0\le \ell <m$
\begin{equation*}
    v_{\ell,m} \defeq 
    \sum_{i,k=\ell+1}^m 
    \frac{\cov(\Delta_i,\Delta_k)\E(N_i\wedge N_k) +
      \E\Delta_i\E\Delta_k\cov(N_i, N_k)}{\E N_i \E N_k} .
\end{equation*}
If there exists a strictly increasing sequence of positive integers 
$(m_k)_{k\ge 1}$ 
such that 
$\lim_{k\to\infty}\sup_{j\ge 1} v_{m_k,m_{k+j}}=0$ 
and $\sum_{i\ge 1} N_i<\infty$ almost surely,
then the estimator
\[
    Z \defeq \sum_{i=1}^\infty \frac{1}{\E N_i} \sum_{j=1}^{N_i}
    \Delta_i^{(j)} 
\]
is unbiased $\E Z = \E Y$ and $\var(Z) = \lim_{k\to\infty}
v_{0,m_k}<\infty$.
\end{theorem}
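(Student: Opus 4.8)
The plan is to work with the truncations $S_m \defeq \sum_{i=1}^m T_i$, where $T_i \defeq (\E N_i)^{-1}\sum_{j=1}^{N_i}\Delta_i^{(j)}$ (an empty sum when $N_i=0$), to compute their mean and covariance structure exactly, to deduce that $(S_{m_i})_{i\ge1}$ is Cauchy in $L^2$, and finally to identify its limit with $Z$.

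First I would record the moment identities obtained by conditioning on the integers $(N_i)$. Since $(N_i)$ is independent of $(\Delta_i^{(j)})_{i,j}$ and the columns $(\Delta_i^{(j)})_{i\ge1}$, $j\ge1$, are i.i.d.\ copies of $(\Delta_i)_{i\ge1}$, conditioning on $N_i$ gives $\E[\sum_{j=1}^{N_i}\Delta_i^{(j)}\mid N_i]=N_i\E\Delta_i$, whence $\E T_i=\E\Delta_i$ and $\E S_m=\sum_{i=1}^m\E\Delta_i=\E Y_m$; the same conditioning together with the square integrability of $N_i$ and $\Delta_i$ shows each $T_i$, hence each $S_m$, is square integrable. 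Likewise $\sum_{i\ge1}N_i<\infty$ almost surely means the series defining $Z$ has almost surely only finitely many nonzero terms, so $Z$ is a well-defined random variable and $S_m\to Z$ almost surely.

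The heart of the argument is the covariance of $T_i$ and $T_k$. I would condition on $(N_i,N_k)$ and expand the product of the two inner sums over the realization indices, splitting it into the diagonal part $j=l$ and the off-diagonal part $j\ne l$: diagonal terms use the same realization, so $\E[\Delta_i^{(j)}\Delta_k^{(j)}]=\E[\Delta_i\Delta_k]$ and there are exactly $N_i\wedge N_k$ of them; off-diagonal terms use independent realizations, contribute $\E\Delta_i\,\E\Delta_k$, and number $N_iN_k-N_i\wedge N_k$. Taking expectations over $(N_i,N_k)$ and subtracting $\E T_i\,\E T_k=\E\Delta_i\,\E\Delta_k$ gives
\[
  \cov(T_i,T_k)=\frac{\cov(\Delta_i,\Delta_k)\,\E(N_i\wedge N_k)+\E\Delta_i\,\E\Delta_k\,\cov(N_i,N_k)}{\E N_i\,\E N_k},
\]
so that $\var(S_m)=v_{0,m}$ and, for $\ell<m$, $\var(S_m-S_\ell)=v_{\ell,m}$. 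I expect this accounting of ``shared realizations'' via $N_i\wedge N_k$ to be the only subtle point; the rest is bookkeeping.

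Finally, along the subsequence, $\|S_{m_{i+k}}-S_{m_i}\|_2^2=(\E Y_{m_{i+k}}-\E Y_{m_i})^2+v_{m_i,m_{i+k}}$; the first term tends to $0$ uniformly in $k$ as $i\to\infty$ because the convergent sequence $(\E Y_m)$ is Cauchy, and the second does so by hypothesis, so $(S_{m_i})$ converges in $L^2$ to some $Z'$. Passing to a further, almost surely convergent, subsequence and comparing with $S_m\to Z$ almost surely identifies $Z'=Z$ almost surely, so $Z\in L^2$. Then $\E Z=\lim_i\E S_{m_i}=\lim_i\E Y_{m_i}=\E Y$ and $\var(Z)=\lim_i\var(S_{m_i})=\lim_i v_{0,m_i}<\infty$, as claimed.
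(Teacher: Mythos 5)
Your proposal is correct and follows essentially the same route as the paper's proof: truncate to $Z_m$, compute the cross-moments by splitting the double sum over realization indices into the diagonal part (counted by $N_i\wedge N_k$) and the off-diagonal part, obtain $\E(Z_m-Z_\ell)^2 = v_{\ell,m}+(\E Y_m-\E Y_\ell)^2$, and conclude via $L^2$-Cauchyness along $(m_i)$ together with the almost sure convergence $Z_m\to Z$ guaranteed by $\sum_i N_i<\infty$. The only cosmetic difference is that you identify the $L^2$ limit by passing to a further a.s.-convergent subsequence, where the paper uses the a.s. convergence of the whole sequence directly; both are valid.
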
 
\noindent The proof of Theorem \ref{thm:general} is given in Appendix
\ref{app:proof-general}.

Consider first a simple example of Theorem \ref{thm:general},
where $N_i$ are taken independent.
\begin{example}[Independent levels] 
Let $(N_i)_{i\ge 1}$ be independent with $\mu_i \defeq \E N_i$ and
$\sum_{i\ge 1}\mu_i<\infty$, 
and suppose $\beta_i\defeq \var(N_i)/\mu_i <\infty$, and let 
$(\Delta_i)_{i\ge 1}$ be uncorrelated. Then,
\begin{align*}
    v_{\ell,m} &= \sum_{i=\ell+1}^m \frac{\var(\Delta_i) \E N_i +
      \var(N_i)(\E \Delta_i)^2}{(\E N_i)^2} 
    = \sum_{i=\ell+1}^m \frac{\var(\Delta_i) +
      \beta_i(\E\Delta_i)^2}{\mu_i }.
\end{align*}
If $\sum_i N_i<\infty$ a.s.~and $v\defeq \sum_{i\ge 1} [\var(\Delta_i) + \beta_i (\E
\Delta_i)^2]/\mu_i<\infty$, then
the corresponding estimator is unbiased with variance $v$.
In particular, taking $N_i$ Poisson with intensity $np_i$ where
$\sum_{i\ge 1}p_i=1$, then $\sum_i N_i <\infty$ by the Borel-Cantelli
lemma and $v = n^{-1} \sum_{i\ge 1} \E\Delta_i^2/p_i$.
\end{example} 

The averages of single term estimators and sum estimators introduced
in Theorems \ref{thm:single-term} and \ref{thm:sum} correspond to
certain dependence structures of $(N_i)_{i\ge 1}$, as we shall see
next. The following two examples are important, because the new
schemes introduced in Section \ref{sec:new} will be based on the same
constructions.

\begin{example}[Single term estimators] 
    \label{ex:single} 
Suppose Condition \ref{cond:single-term} holds. Define
\[
    N_i \defeq \sum_{j=1}^n \charfun{R^{(j)} = i},
    \qquad
    \text{where $R^{(j)}$ are independent with $P(R^{(j)}=i) =
p_i$},
\]
and call the estimator $Z^{(n)}_\iid$. It is easy to see that this 
corresponds to an average of $n$ 
independent single term estimators $Z^{(1)}$.
Note that $(N_i)_{i\ge 1}$ follow a multinomial distribution
with parameters $n$ and $(p_i)_{i\ge 1}$. We have $\E N_i = n p_i$ and
\[
    \E(N_i N_k)
    = \E \bigg[\sum_{j,\ell=1}^n
    \charfun{R^{(j)}=i}\charfun{R^{(\ell)}=k} \bigg]
    = n p_i \charfun{i=k} + n(n-1) p_i p_k,
\]
so $\cov(N_i, N_k) = n (p_i\charfun{i=k} - p_i p_k)$.
By the independence of $(\Delta_i)_{i\ge 1}$, 
\begin{align}
    v_{\ell,m}  
    &= \sum_{i=\ell+1}^m \frac{\var(\Delta_i)}{\E N_i}
    + \sum_{i,k=\ell+1}^m \frac{\E\Delta_i
      \E\Delta_k \cov(N_i, N_k)}{\E N_i \E N_k} \nonumber\\
    &= \frac{1}{n}\bigg(\sum_{i=\ell+1}^m \frac{\E\Delta_i^2}{p_i}
    - (\E Y_m - \E Y_{\ell})^2\bigg).
    \label{eq:v-single}
\end{align}
By assumption
$\sum_{i=\ell}^\infty \E\Delta_i^2/p_i<\infty$ and
$\E Y_i\to \E Y$, so
$\lim_{\ell\to\infty}\sup_{m>\ell} v_{\ell,m}=0$.
The variance satisfies
\[
    \var(Z^{(n)}_\iid) = \lim_{m\to\infty} v_{0,m}
    = \frac{1}{n}\bigg(\sum_{i=1}^\infty \frac{\E\Delta_i^2}{p_i}
    - (\E Y)^2\bigg),
\]
so $\var(Z^{(n)}_\iid) 
= n^{-1}\var(Z^{(1)})$. The above also proves
Theorem \ref{thm:single-term} with $n=1$.
\end{example} 

\begin{example}[Sum estimators] 
    \label{ex:sum} 
Suppose $(\Delta_i^{(j)})_{i\ge 1}$ are independent realisations of
$(\Delta_i)_{i\ge 1}$ satisfying Condition \ref{cond:sum}, and
define $\tilde{N}_i \defeq \sum_{j=1}^n \charfun{R^{(j)}\ge i}$, 
where $R^{(j)}$ are independent with $\P(R^{(j)}= i)=p_i$. 
This estimator, which we
denote by $Z^{(n)}_{\sumsymb,\iid}$, 
corresponds to an average of $n$ independent sum estimators
$Z_\sumsymb^{(1)}$:
\[
    Z^{(n)}_{\sumsymb,\iid}
    = \sum_{i=1}^\infty \frac{1}{\E \tilde{N}_i} \sum_{j=1}^{\tilde{N}_i}
    \Delta_i^{(j)}
    \overset{\mathrm{d}}{=} \frac{1}{n} \sum_{j=1}^n
    \sum_{i=1}^{R^{(j)}} \frac{\Delta_i^{(j)}}{\tilde{p}_i},
\]
because $\E \tilde{N}_i = n \tilde{p}_i$, and where the latter
equality would hold if $(R^{(j)})$ were non-increasing. Changing the indexing
to ensure that $(R^{(j)})$ are non-increasing does not affect the
distribution. For all $i,k\ge 1$, 
\begin{align*}
    \E( \tilde{N}_i \tilde{N}_k )
    &= \sum_{j=1}^n \tilde{p}_{i\vee k} 
    + \sum_{\stackrel{j,\ell=1}{j\neq \ell}}^n
    \tilde{p}_i \tilde{p}_k
    = \E \tilde{N}_{i\vee k} + \frac{n-1}{n} \E \tilde{N}_i \E \tilde{N}_k,
\end{align*}
so $\cov(\tilde{N}_i, \tilde{N}_k) = \E \tilde{N}_{i\vee k} - n^{-1} \E \tilde{N}_i \E \tilde{N}_k$.
We also have $\E (\tilde{N}_i \wedge \tilde{N}_k) = \E \tilde{N}_{i\vee k}$
because $\tilde{N}_i\ge \tilde{N}_k$ for $i\le k$, so for $1\le \ell<m$,
\begin{align*}
    v_{\ell,m}
    &= \sum_{i,k=\ell+1}^m \frac{\E \tilde{N}_{i\vee k}\big(\cov(\Delta_i,\Delta_k)
      + \E \Delta_i \E \Delta_k\big)}{\E \tilde{N}_i \E \tilde{N}_k}
      - \frac{1}{n}\sum_{i,k=\ell+1}^m \E \Delta_i \E \Delta_k \\
    &= \sum_{i=\ell+1}^m \bigg( \frac{\E \Delta_i^2}{\E \tilde{N}_i} 
    + 2 \sum_{k=i+1}^m 
    \frac{\E( \Delta_i \Delta_k)}{\E \tilde{N}_i} 
    \bigg) -
     \frac{1}{n}(\E Y_m - \E Y_{\ell})^2.
\end{align*}

Denote for the rest of the proof $D_{i,m} \defeq Y_i - Y_m$.
If Condition \ref{cond:sum} \eqref{item:coupled-sum} holds,
we obtain
\begin{align*}
v_{\ell,m}
    &= \frac{1}{n} \sum_{i=\ell+1}^m \frac{\E D_{i-1,m}^2 -
      \E D_{i,m}^2}{\tilde{p}_i}
    -
     \frac{1}{n}(\E D_{m,\ell})^2 ,
\end{align*}
because $\Delta_i^2 + 2\Delta_i D_{m,i} = D_{i-1,m}^2 -  D_{i,m}^2$.
Let $(m_k)_{k\ge 1}$ be from Lemma \ref{lem:subsequence} in
Appendix \ref{app:single-sum}, then for any $k,j\ge 1$
\[
    v_{m_k,m_{k+j}} 
    \le \frac{4}{n} \sum_{i=m_{k}+1}^\infty \frac{\E(Y_{i-1}-Y)^2}{\tilde{p}_i}
    - \frac{1}{n}(\E D_{m_{k+j},m_k})^2,
\]
which converges to zero as $k\to\infty$. For any fixed $k\ge 1$, 
we have $v_{0,m_{k+j}}\to \var(Z_{\sumsymb,\iid}^{(n)})$ as
$j\to\infty$, so
\begin{align*}
    \var(Z_{\sumsymb,\iid}^{(n)}) 
    &= \frac{1}{n} \lim_{j\to\infty} \bigg(
    \sum_{i=1}^{m_k} \frac{\E D_{i-1,m_{k+j}}^2 -
      \E D_{i,m_{k+j}}^2}{\tilde{p}_i} + (\E Y_{m_{k+j}})^2
      + \sum_{i=m_k+1}^{m_{k+j}}
      \frac{\E D_{i-1,m_{k+j}}^2  -
      \E D_{i,m_{k+j}}^2}{\tilde{p}_i}
      \bigg).
\end{align*}
The latter sum is upper bounded by $4\sum_{i\ge m_k+1}
\E(Y_{i-1}-Y)^2/\tilde{p}_i$, and because $\E D_{i,m_{k+j}}^2\to \E
(Y_i - Y)^2$ and $\E Y_{m_{k+j}}\to \E Y$ as $j\to\infty$, 
we may conclude that
\begin{align*}
    \var(Z_{\sumsymb,\iid}^{(n)}) 
    &=\frac{1}{n}\bigg(
    \sum_{i=1}^{m_k} 
    \frac{\E (Y_{i-1}-Y)^2 - \E (Y_i - Y)^2}{\tilde{p}_i}
    -(\E Y)^2
    \bigg)
    = \frac{1}{n} \var(Z_\sumsymb^{(1)}).
\end{align*}

Suppose then that Condition \ref{cond:sum} \eqref{item:independent-sum}
holds.
It is straightforward to check that 
\begin{align*}
    v_{\ell,m}
    = \frac{1}{n} \sum_{i=\ell+1}^m \frac{\var(\Delta_i) + (\E
      Y_{i-1}-\E Y_m)^2 -
      (\E Y_i-\E Y_m)^2}{\tilde{p}_i}
    -
     \frac{1}{n}(\E Y_m - \E Y_{\ell})^2 ,
\end{align*}
which satisfies $\lim_{\ell\to\infty}\sup_{m>\ell} v_{\ell,m}=0$ by
assumption, and similarly as above, 
\[
    \var(Z_{\sumsymb,\iid}^{(n)}) 
    = \lim_{k\to\infty} v_{0,k} = \frac{1}{n} \bigg( \sum_{i=1}^\infty
    \frac{\var(\Delta_i) + (\E
      Y_{i-1}-\E Y)^2 -
      (\E Y_i-\E Y)^2}{\tilde{p}_i} - (\E Y)^2 \bigg).
\]
Taking $n=1$ concludes also the proof of Theorem \ref{thm:sum}.
\end{example} 

The last example is an obvious `hybrid' scheme involving
MLMC and an `unbiased tail' scheme, which also falls into the framework
of Theorem \ref{thm:general}.
\begin{example} 
    \label{ex:hybrid-mlmc} 
Assume that $r,m\in\N$ and that 
$N_i\equiv n_i\in\N$ for $i=1,\ldots,m$, and
$(p_i)_{i>m}$ are positive with $\sum_{i>m} p_i=1$, let 
$R^{(j)}$ be independent with $\P(R^{(j)}=i)=p_i$, and define
$N_i \defeq \sum_{j=1}^{r} \charfun{R^{(j)}=i}$
for $i>m$. The estimator can be written as
\[
    Z = \sum_{i=1}^m \frac{1}{n_i}\sum_{j=1}^{n_i}
    \Delta_i^{(j)} + \sum_{i=m+1}^\infty \frac{1}{r p_i}\sum_{j=1}^{N_i}
    \Delta_i^{(j)}.
\]
The first term coincides with an MLMC estimator with $m$
levels, with expectation $\E Y_m$, and
the second term is an average of single term estimators of $\E Y - \E Y_m$, with
$r$ samples. Note that we could have used any unbiased scheme in the latter
part, provided it satisfies the conditions in Theorem \ref{thm:general}.
\end{example} 

The new sampling schemes discussed next in Section \ref{sec:new}
provide a different view on the balancing; see in particular Theorem
\ref{thm:residual-eff}.


\section{New unbiased estimators} 
\label{sec:new} 

Let us now turn into new practically interesting estimators which
correspond to specific choices in Theorem \ref{thm:general}. The
estimators are based on stratification, a classical variance reduction
technique in survey sampling \citep[e.g.][]{hansen-hurwitz-madow}, 
which have also been widely used in Monte Carlo; see for instance
\cite{glasserman,douc-cappe-moulines}.

The following lemma states classical results on stratification
\cite[cf.][for proof]{vihola-unbiased}.
\begin{lemma} 
    \label{lem:strat} 
Let $X$ be an integrable random variable and let $A_1,\ldots,A_m$ be
exhaustive disjoint
events with 
$\P(A_i)=q_i>0$, and let
$\ell_1,\ldots,\ell_m\in\N$. Assume
$X_i^{(j)}$ are random variables with conditional
laws $\P(X_i^{(j)}\in B) = \P(\{X\in B\}\cap A_i)/q_i$.
The stratified estimator 
  $\tilde{X}\defeq \sum_{i=1}^m (q_i/\ell_i)\sum_{j=1}^{\ell_i}
    X_i^{(j)}$  satisfies
\begin{enumerate}[(i)]
\item \label{item:strat-pessimistic} Unbiasedness
    $\E\tilde{X} = \E X$ and $\var(\tilde{X})\le \var(X)$.
\item \label{item:prop-alloc}
  If $\{X_i^{(j)}\}_{i,j}$ are independent
and $\ell_i = q_i \ell$ (proportional allocation), then $\tilde{X} = \ell^{-1} 
\sum_{i=1}^m \sum_{j=1}^{\ell_i} X_i^{(j)}$ and
$\var(\tilde{X})  \le \ell^{-1}\var(X)$. 
\end{enumerate}
\end{lemma}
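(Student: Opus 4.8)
The plan is to reduce both assertions to the law of total variance with respect to the stratum index. First I would introduce an auxiliary random index $I\in\{1,\dots,m\}$ with $\P(I=i)=q_i$; since $A_1,\dots,A_m$ are disjoint and exhaustive we have $\P(X\in B)=\sum_{i=1}^m q_i\,\P(X\in B\mid A_i)$, so the prescribed marginal law of each $X_i^{(j)}$ is exactly that of $X$ conditioned on $\{I=i\}$. Writing $\mu_i\defeq\E[X\mid A_i]=q_i^{-1}\E[X\charfun{A_i}]$ and $\sigma_i^2\defeq\var(X\mid A_i)$, unbiasedness then follows from the marginals alone together with $\sum_{i=1}^m\charfun{A_i}=1$ a.s.: $\E\tilde X=\sum_{i=1}^m (q_i/\ell_i)\sum_{j=1}^{\ell_i}\E X_i^{(j)}=\sum_{i=1}^m q_i\mu_i=\sum_{i=1}^m\E[X\charfun{A_i}]=\E X$.

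For the variance I would invoke the standard stratified-sampling dependence structure: within each stratum $X_i^{(1)},\dots,X_i^{(\ell_i)}$ are i.i.d.\ and the blocks are independent across strata --- this is what makes an estimator ``stratified'', and uncorrelatedness within and between strata already suffices for the identity below. Under it, $\var(\tilde X)=\sum_{i=1}^m (q_i/\ell_i)^2\var\!\big(\sum_{j=1}^{\ell_i}X_i^{(j)}\big)=\sum_{i=1}^m (q_i^2/\ell_i)\,\sigma_i^2$. On the other side, the law of total variance applied to $X$ conditionally on $I$ gives $\var(X)=\E[\var(X\mid I)]+\var(\E[X\mid I])=\sum_{i=1}^m q_i\sigma_i^2+\sum_{i=1}^m q_i(\mu_i-\E X)^2\ge\sum_{i=1}^m q_i\sigma_i^2$, the between-strata term being nonnegative. (If $X$ is not square integrable both variance claims are vacuous, so I may assume $\var(X)<\infty$, which forces $\sigma_i^2<\infty$ for each $i$ because $q_i>0$.)

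Combining the two displays finishes the proof. For \eqref{item:strat-pessimistic}, $\ell_i\ge 1$ and $q_i\le 1$ yield $q_i^2/\ell_i\le q_i$, so $\var(\tilde X)=\sum_i (q_i^2/\ell_i)\sigma_i^2\le\sum_i q_i\sigma_i^2\le\var(X)$. For \eqref{item:prop-alloc}, substituting $\ell_i=q_i\ell$ (tacitly a positive integer) makes every weight $q_i/\ell_i$ equal to $\ell^{-1}$, hence $\tilde X=\ell^{-1}\sum_{i=1}^m\sum_{j=1}^{\ell_i}X_i^{(j)}$ and $\var(\tilde X)=\sum_i (q_i^2/(q_i\ell))\sigma_i^2=\ell^{-1}\sum_i q_i\sigma_i^2\le\ell^{-1}\var(X)$.

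I do not expect any genuinely hard step here --- this is classical survey-sampling material --- so the care needed is entirely in the bookkeeping: making the within/between-stratum (un)correlatedness explicit for the variance identity, noting that the inequality in \eqref{item:strat-pessimistic} uses precisely $\ell_i\ge 1$, and that proportional allocation presupposes $q_i\ell\in\N$. An alternative that avoids the auxiliary index $I$ is to condition directly on the $\sigma$-algebra generated by $\charfun{A_1},\dots,\charfun{A_m}$, but the law-of-total-variance route is the cleanest.
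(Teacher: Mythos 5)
Your argument for part \eqref{item:prop-alloc} and for unbiasedness is fine, but your proof of the variance bound in part \eqref{item:strat-pessimistic} has a genuine gap: you impose an extra hypothesis that is not in the statement and that fails in one of the places the lemma is actually used. The lemma only prescribes the \emph{marginal} laws of the $X_i^{(j)}$; independence (or even uncorrelatedness) within and across strata is assumed only in part \eqref{item:prop-alloc}. Part \eqref{item:strat-pessimistic} must hold for an \emph{arbitrary} joint law with those marginals --- this is precisely why the paper can apply it to the systematic sampling estimators of Definition \ref{def:systematic}, where all the $R^{(j)}$ (hence all the $X_i^{(1)}$) are driven by a single uniform $U$ and are strongly dependent, so your identity $\var(\tilde X)=\sum_i (q_i^2/\ell_i)\sigma_i^2$ is simply not available. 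Your parenthetical claim that the within/between-stratum independence ``is what makes an estimator stratified'' is exactly the assumption you are not entitled to here.

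The fix is short. With $\sigma_i^2\defeq\var(X\mid A_i)$, Cauchy--Schwarz gives $\cov(X_i^{(j)},X_k^{(j')})\le\sigma_i\sigma_k$ for every pair, whence
\begin{equation*}
\var(\tilde X)=\sum_{i,k=1}^m\frac{q_iq_k}{\ell_i\ell_k}\sum_{j=1}^{\ell_i}\sum_{j'=1}^{\ell_k}\cov\big(X_i^{(j)},X_k^{(j')}\big)\le\Big(\sum_{i=1}^m q_i\sigma_i\Big)^2\le\sum_{i=1}^m q_i\sigma_i^2\le\var(X),
\end{equation*}
the penultimate inequality by Jensen (or Cauchy--Schwarz) with respect to the probability vector $(q_i)$, and the last one by your law-of-total-variance decomposition. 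Everything else in your write-up stands: unbiasedness needs only the marginals, part \eqref{item:prop-alloc} does assume independence so your computation $\var(\tilde X)=\ell^{-1}\sum_i q_i\sigma_i^2\le\ell^{-1}\var(X)$ is correct, and your remarks about square integrability and $q_i\ell\in\N$ are appropriate.
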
 

In what follows,
assume $(p_i)_{i\ge 1}$ are positive and such that $\sum_{i\ge 1}
p_i=1$, and denote $\tilde{p_i} \defeq \sum_{j\ge i} p_j$.
Let us first consider uniform stratification schemes.
For that purpose, recall the definition of the generalised inverse distribution function 
$F^{-1}: (0,1)\to\N$ corresponding to $(p_i)_{i\ge 1}$,
\[
    F^{-1}(u) \defeq \min\{k\in\N\given F(k)\ge u\}
    \qquad\text{with}\qquad \textstyle F(k) \defeq \sum_{i=1}^k p_i.
\]
The well-known inverse distribution function method
states that a uniform $U\sim U(0,1)$ is transformed by $F^{-1}$ into $R=F^{-1}(U)$
with $\P(R=i)=p_i$.

\begin{definition}[Uniformly stratified estimators] 
    \label{def:stratified} 
Let $n\in\N$, and assume $U^{(j)}$ are independent 
$U\big(\frac{j-1}{n},\frac{j}{n}\big)$ random variables for $j=1,\ldots,n$.
Let $R^{(j)} = F^{-1}(U^{(j)})$.
\begin{enumerate}[(i)]
\item \label{item:strat-single}
Define $N_i \defeq \sum_{j=1}^n \charfun{R^{(j)}=i}$,
then the estimator
$Z_\str^{(n)}$ defined as in Theorem 
\ref{thm:general} is the \emph{uniformly stratified single term
  estimator}.
\item \label{item:strat-sum}
  Define $\tilde{N}_i \defeq \sum_{j=1}^n \charfun{R^{(j)}\ge i}$,
  then the estimator
  $Z_{\sumsymb,\str}^{(n)}$ defined as in 
  Theorem \ref{thm:general} is the \emph{uniformly stratified sum
    estimator}.
\end{enumerate}
\end{definition} 

\begin{definition}[Systematic sampling estimators] 
\label{def:systematic} 
Let $u_j \defeq (j-1)/n$ and define $U^{(j)} \defeq
u_j + U$ for all $j=1,\ldots,n$, where $U\sim U(0,1/n)$, 
and let $R^{(j)} \defeq F^{-1}(U^{(j)})$.
Define then $N_i$ and $\tilde{N}_i$ as in Definition
\ref{def:stratified} \eqref{item:strat-single} and
\eqref{item:strat-sum}, respectively, and
the corresponding \emph{systematic sampling single-term estimator}
$Z_\sys^{(n)}$ and \emph{systematic sampling sum estimator}
$Z_{\sumsymb,\sys}^{(n)}$.
\end{definition} 

The consistency and a variance bound for the uniformly
stratified and systematic sampling estimators are stated in Proposition
\ref{prop:consistency}, after introducing another slightly different 
stratification scheme. 

\begin{definition}[Residual sampling estimators] 
\label{def:residual} 
Let $n\in\N$, define $n_i \defeq \lfloor n p_i \rfloor$
and let $r \defeq n - \sum_{i\ge 1} n_i\ge 0$. If $r>0$, define
the `residual' probability distribution $(p_i^*)_{i\ge 1}$ as 
$p_i^* \defeq (n p_i - n_i)/r$,
and let $Q^{(j)}$ be independent random variables such that 
$\P(Q^{(j)}=i)=p_i^*$ for $j=1,\ldots,r$. 
\begin{enumerate}[(i)]
\item 
Define $N_i \defeq n_i + N_i^*$ where 
    $N_i^* \defeq \textstyle\sum_{j=1}^r \charfun{Q^{(j)}=i}$,
then the estimator
$Z_\res^{(n)}$ defined as in Theorem 
\ref{thm:general} is the \emph{residual sampling single term estimator}.
\item Define 
   $\tilde{N}_i \defeq \tilde{n}_i + \tilde{N}_i^*$, where
   $\tilde{n}_i \defeq \textstyle\sum_{k=1}^i n_i$ and
   $\tilde{N}_i^* \defeq \textstyle\sum_{j=1}^r \charfun{Q^{(j)}\ge
     i}$,
  then the estimator
  $Z_{\sumsymb,\res}^{(n)}$ defined as in 
  Theorem \ref{thm:general} is the \emph{residual sampling sum
    estimator}.
\end{enumerate}
\end{definition} 

\begin{proposition} 
\label{prop:consistency} 
The estimators in Definitions \ref{def:stratified},
\ref{def:systematic} and \ref{def:residual} satisfy:
\begin{enumerate}[(i)]
\item Assume Condition \ref{cond:single-term}, then
\begin{align*}
    \E Z_\str^{(n)} 
    = \E Z_\res^{(n)} 
    &=\E Y, &
    \var(Z_\str^{(n)})\vee 
    \var(Z_\res^{(n)})
    &\le n^{-1} \var(Z^{(1)}), \\
    \E Z_\sys^{(n)}  & = \E Y,
    &
    \var(Z_\sys^{(n)})
    &\le \var(Z^{(1)}).
\end{align*}
\item Assume Condition \ref{cond:sum}, then
\begin{align*}
    \E Z_{\sumsymb,\str}^{(n)} 
    = \E Z_{\sumsymb,\res}^{(n)} 
    &=\E Y, &
    \var(Z_{\sumsymb,\str}^{(n)})\vee 
    \var(Z_{\sumsymb,\res}^{(n)})
    &\le n^{-1} \var(Z_\sumsymb^{(1)}), \\
    \E Z_{\sumsymb,\sys}^{(n)}  & = \E Y,
    &
    \var(Z_{\sumsymb,\sys}^{(n)})
    &\le \var(Z_\sumsymb^{(1)}).
\end{align*}
\end{enumerate}
\end{proposition}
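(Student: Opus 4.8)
The plan is to show that each of the six estimators coincides \emph{in distribution} with a stratified estimator of either the single-term estimator $Z^{(1)}$ of Theorem \ref{thm:single-term} or the sum estimator $Z_\sumsymb^{(1)}$ of Theorem \ref{thm:sum}, after which all of the assertions follow directly from Lemma \ref{lem:strat}. Throughout I would use that $Z^{(1)}$ is square integrable under Condition \ref{cond:single-term} (its second moment is $\sum_i \E\Delta_i^2/p_i$) and that $Z_\sumsymb^{(1)}$ is square integrable under Condition \ref{cond:sum} (by the explicit second-moment formulas in Theorem \ref{thm:sum}), so that the variances on the right-hand sides are finite and Lemma \ref{lem:strat} is applicable. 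I would also record at the outset the common normalisation $\E N_i = n p_i$ and $\E \tilde N_i = n \tilde p_i$ valid for all three constructions (a one-line check in each case, e.g.\ $\E N_i = n_i + r p_i^* = n p_i$ for residual sampling), so that the prefactor $1/\E N_i = 1/(n p_i)$ appearing in the Theorem \ref{thm:general} representation of $Z$ is exactly the per-level weight produced by the stratification below.

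For the uniformly stratified and systematic estimators I would stratify $U\sim U(0,1)$ on the $n$ intervals $A_j \defeq \big(\tfrac{j-1}{n},\tfrac{j}{n}\big]$, each of probability $q_j = 1/n$, and put one sample in each stratum, i.e.\ proportional allocation with $\ell = n$. The key point is that, conditionally on $\{U\in A_j\}$, the index $R = F^{-1}(U)$ is distributed as $F^{-1}$ of a $U(A_j)$ variable and is independent of the increment process, so the stratum sample built from $U^{(j)}$ together with a fresh realisation of $(\Delta_i)_{i}$ has the conditional law of $Z^{(1)}$ (respectively $Z_\sumsymb^{(1)}$) given $A_j$; weighting these by $1/n$ and summing reproduces $Z_\str^{(n)}$ and $Z_\sys^{(n)}$ (respectively their sum versions). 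For the uniformly stratified estimators the $U^{(j)}$ and the fresh increment realisations are mutually independent, so Lemma \ref{lem:strat}\eqref{item:prop-alloc} gives the bound $n^{-1}\var(Z^{(1)})$; for the systematic estimators the $U^{(j)}=u_j+U$ all share the single variable $U$, so only Lemma \ref{lem:strat}\eqref{item:strat-pessimistic} applies (it constrains only the marginal conditional laws of the stratum samples), yielding $\var(Z^{(1)})$. Unbiasedness is part of Lemma \ref{lem:strat}\eqref{item:strat-pessimistic} in every case.

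For the residual sampling estimators I would instead stratify $(0,1]$ into the sets $A_{i,s}\defeq \big(F(i-1)+\tfrac{s-1}{n},\,F(i-1)+\tfrac{s}{n}\big]$ for $1\le s\le n_i$ (the `integer part' of the $i$-th block $B_i = (F(i-1),F(i)]$, cut into $n_i$ pieces of length $1/n$; only finitely many $i$ have $n_i\ge 1$, so this is a finite partition) together with the single `residual' stratum $A_* \defeq (0,1]\setminus\bigcup_{i,s}A_{i,s}$, which has probability $\sum_i(p_i-n_i/n) = r/n$. On $A_{i,s}$ the index $F^{-1}(U)$ is deterministically $i$, and a short computation gives $\P(F^{-1}(U)=i\mid U\in A_*) = (np_i-n_i)/r = p_i^*$, so the stratum sample on $A_*$ has the law of $Z^{(1)}$ (respectively $Z_\sumsymb^{(1)}$) conditioned on $A_*$, matching the role of the $Q^{(j)}$. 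Allocating one sample to each $A_{i,s}$ and $r$ samples to $A_*$ is proportional allocation with $\ell = n$, since $q_{i,s}\cdot n = 1$ and $q_*\cdot n = r$, and all the draws and the fresh increment realisations are independent; Lemma \ref{lem:strat}\eqref{item:prop-alloc} then yields unbiasedness and the bounds $n^{-1}\var(Z^{(1)})$ and $n^{-1}\var(Z_\sumsymb^{(1)})$.

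The step requiring the most care is the verification that the estimator produced by the formula of Theorem \ref{thm:general} literally equals the stratified estimator just described, and not merely matches its first two moments. This is an indexing check: $N_i$ (respectively $\tilde N_i$) counts how many of the $n$ draws land on level $i$ (respectively on levels $\ge i$), and in the Theorem \ref{thm:general} representation level $i$ consumes coordinates $1,\dots,N_i$ of the independent realisations; reading these realisations `by draw' — for the sum estimators using that $\tilde N_1\ge\tilde N_2\ge\cdots$, so that the $k$-th realisation serves the draw with the $k$-th largest value, truncated there — shows that each of the $n$ draws is attached to its own independent realisation, which is precisely the stratum-sample structure above. Once this identification is in place, nothing beyond Lemma \ref{lem:strat} is needed. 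I expect this bookkeeping, in particular in the coupled case of Condition \ref{cond:sum}\eqref{item:coupled-sum} where the $\Delta_i$ within a single realisation are dependent, to be the only genuinely fiddly part of the argument.
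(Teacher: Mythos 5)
Your proof is correct and follows essentially the same route as the paper: each estimator is identified in distribution as a stratified version of $Z^{(1)}$ or $Z_\sumsymb^{(1)}$, and Lemma \ref{lem:strat} \eqref{item:prop-alloc} (proportional allocation with independent stratum samples) or \eqref{item:strat-pessimistic} (systematic case) is then applied. The only substantive difference is your choice of strata in the residual case --- $n-r$ unit cells on which $F^{-1}$ is constant plus one residual stratum of mass $r/n$ carrying $r$ independent samples, instead of the paper's rearrangement via a modified inverse $g$ --- which is an equally valid proportional allocation (and matches Definition \ref{def:residual} more directly), while your explicit bookkeeping identifying the Theorem \ref{thm:general} representation with the stratified estimator spells out a step the paper leaves implicit.
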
 
\noindent Proposition \ref{prop:consistency} follows as a consequence of 
Lemma \ref{lem:strat}; the detailed proof is given in Appendix
\ref{app:proof-prop-consistency}.

\begin{remark} 
Instead of using $u_j = (j-1)/n$ in systematic sampling, we could use 
instead any sequence $u_1,\ldots,u_n\in[0,1]$, and let $U^{(j)} \defeq
u_j + U \mod 1$, where $U\sim U(0,1)$. This would not be
stratification, but it is direct to check the estimators still retain
the same expectation, and also the same pessimistic variance bound.
For instance, using a low-discrepancy sequence $(u_j)$ would
correspond to randomised quasi-Monte Carlo \citep[e.g.][]{dick-kuo-sloan}.
\end{remark} 

Proposition \ref{prop:consistency} states that all the new estimators
are unbiased, and that the uniformly stratified and residual
sampling estimators cannot be worse than averages of independent
single term and sum estimators, in terms of variance. In fact, they often have a
strictly lower variance, but it is generally difficult to quantify the
benefit. The following Theorem \ref{thm:residual-eff} is often more useful,
indicating that stratified and residual sampling schemes attain asymptotically the
efficiency of MLMC under general conditions.

Let us first formulate `idealised' MLMC strategies 
based on limiting allocations
$(p_i)_{i\ge 1}$. 
\begin{definition} 
    \label{def:mlmc} 
Assume $(p_i)_{i\ge 1}$ are non-negative 
with $\sum_{i\ge 1} p_i = 1$ and define $\tilde{p}_i = \sum_{j\ge i}
p_i$. For any $n\in\N$, define $n_i \defeq \lfloor n p_i \rfloor$, 
$\hat{n}_i \defeq \lfloor n \tilde{p}_i \rfloor$, 
$m_n = \max\{i\ge 0\given
  n_i>0\}$ and $\tilde{m}_n = \max\{i\ge 0\given
  \hat{n}_i>0\}$ and denote
\[
    Z_\ml^{(n)}\defeq 
    \sum_{i=1}^{m_n} \frac{\charfun{n_i>0}}{n_i}
    \sum_{j=1}^{n_i} \Delta_i^{(j)},
    \qquad\text{and}\qquad
    Z_{\sumsymb,\ml}^{(n)}\defeq 
    \sum_{i=1}^{\tilde{m}_n} \frac{1}{\hat{n}_i}
    \sum_{j=1}^{\hat{n}_i} \Delta_i^{(j)}.
\]
\end{definition}

Practical implementations of MLMC are often based on application of
stopping rules, which may determine $n_i$ and $m_n$ during simulation.
Definition \ref{def:mlmc} can be therefore viewed as `idealised'
version of MLMC,  where (nearly) optimal allocation strategy
$(p_i)_{i\ge 1}$ is known beforehand, and
$n_i$ and $m_n$ are determined in terms of a single `running-time'
parameter $n$.

\begin{theorem} 
    \label{thm:residual-eff} 
Below, $Z_*^{(n)}$ (resp.~$Z_{\Sigma,*}^{(n)}$) 
stands for either $Z_\res^{(n)}$ or $Z_\str^{(n)}$
(resp.~$Z_{\Sigma,\res}^{(n)}$ or $Z_{\Sigma,\str}^{(n)}$).
\begin{enumerate}[(i)]
    \item \label{item:residual-eff-single}
      Assume Condition \ref{cond:single-term}
      then
\[
    \lim_{n\to\infty} n \var(Z_\ml^{(n)})
    = \lim_{n\to\infty} n \var(Z_*^{(n)})
    = \sigma_\infty^2  \defeq \sum_{i=1}^\infty
\frac{\var(\Delta_i)}{p_i}.
\]
\item \label{item:residual-eff-sum}
  Assume Condition \ref{cond:sum} \eqref{item:coupled-sum}, then
\[
    \lim_{n\to\infty} n \var(Z_{\sumsymb,*}^{(n)})
    = \sigma_{\sumsymb,\infty}^2  \defeq \sum_{i=1}^\infty
\frac{\var(Y-Y_{i-1}) - \var(Y-Y_i)}{\tilde{p}_i},
\]
and if $\sum_{i\ge 1}
\sup_{k\ge i} \var(Y_{k-1}-Y)/\tilde{p}_k<\infty$, then
$\lim_{n\to\infty} n \var(Z_{\sumsymb,\ml}^{(n)}) =
\sigma_{\sumsymb,\infty}^2$.
\item \label{item:eff-independent-sum}
  Assume Condition \ref{cond:sum} \eqref{item:independent-sum}, then
\[
    \lim_{n\to\infty} n \var(Z_{\sumsymb,\ml}^{(n)})
    = \lim_{n\to\infty} n \var(Z_{\sumsymb,*}^{(n)})
    =
    \sigma_{\sumsymb,\infty}^2 
    =  \sum_{i\ge 1} \frac{\var(\Delta_i)}{\tilde{p}_i}.
\]
\end{enumerate}
\end{theorem}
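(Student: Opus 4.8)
The plan is to establish each variance limit by first obtaining an exact finite-$n$ expression for $n\var$ of the relevant estimator using the $v_{\ell,m}$ machinery of Theorem~\ref{thm:general}, and then to pass to the limit $n\to\infty$ by dominated convergence, controlled termwise by the summability hypotheses in Conditions~\ref{cond:single-term} and~\ref{cond:sum}. For the single-term case~\eqref{item:residual-eff-single}, I would reuse the computation from Example~\ref{ex:single}: for the stratified estimator $Z_\str^{(n)}$ the variables $N_i=\sum_j\charfun{R^{(j)}=i}$ are no longer multinomial, but since $U^{(1)},\ldots,U^{(n)}$ are independent across strata we still have $\E N_i = np_i$ and, crucially, $\cov(N_i,N_k)$ only picks up contributions from those $U^{(j)}$ whose stratum $(\frac{j-1}{n},\frac{j}{n})$ straddles one of the boundaries $F(i),F(k)$; there are $O(1)$ such $j$ for fixed $i,k$, so $n\,v_{\ell,m}\to\sum_{i>\ell}\var(\Delta_i)/p_i$ as $n\to\infty$, the cross terms washing out. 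For $Z_\res^{(n)}$ the decomposition $N_i=n_i+N_i^*$ with $n_i=\lfloor np_i\rfloor$ deterministic gives $\cov(N_i,N_k)=\cov(N_i^*,N_k^*)=r(p_i^*\charfun{i=k}-p_i^*p_k^*)$, and since $np_i^* = np_i-n_i\in[0,1)$ and $r\le$ (number of nonzero levels $\le$ something bounded appropriately), the off-diagonal part is again $o(n^{-1})$ per term after summing. In both cases Proposition~\ref{prop:consistency} already gives the uniform-in-$n$ bound $n\var\le\var(Z^{(1)})=\sigma_\infty^2+(\E Y)^2$ minus lower-order terms, which legitimises the interchange of limit and sum; the MLMC side $n\var(Z_\ml^{(n)})\to\sigma_\infty^2$ follows from $n/n_i=n/\lfloor np_i\rfloor\to 1/p_i$ together with the same dominated-convergence argument.

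For~\eqref{item:residual-eff-sum} and~\eqref{item:eff-independent-sum} the same strategy applies to the sum estimators, now with the tilded counts $\tilde N_i=\sum_j\charfun{R^{(j)}\ge i}$ (resp.\ $\tilde N_i=\tilde n_i+\tilde N_i^*$), using the $v_{\ell,m}$ formula from Example~\ref{ex:sum} as the template. The key algebraic identities $\Delta_i^2+2\Delta_i(Y_m-Y_i)=(Y_{i-1}-Y_m)^2-(Y_i-Y_m)^2$ (coupled case) and the analogous expectation identity (independent case) carry over unchanged; what changes is the covariance structure of $(\tilde N_i)$. For stratified/residual sum sampling one still has $\E\tilde N_i=n\tilde p_i$ and $\E(\tilde N_i\wedge\tilde N_k)=\E\tilde N_{i\vee k}$ (monotonicity of $i\mapsto\tilde N_i$ is preserved), so the leading behaviour of $n\,v_{0,m}$ is $\sum_{i}\bigl(\E\Delta_i^2+2\sum_{k>i}\E(\Delta_i\Delta_k)\bigr)/\tilde p_i=\sigma_{\sumsymb,\infty}^2$ exactly as in Example~\ref{ex:sum}, and the residual stratification correction to $\cov(\tilde N_i,\tilde N_k)$ is again $o(n^{-1})$ per term. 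For the MLMC sum estimator $Z_{\sumsymb,\ml}^{(n)}$ one replaces $\E\tilde N_i$ by $\hat n_i=\lfloor n\tilde p_i\rfloor$; here $n/\hat n_i\to 1/\tilde p_i$, but one needs the extra hypothesis $\sum_i\sup_{k\ge i}\var(Y_{k-1}-Y)/\tilde p_k<\infty$ precisely to dominate the partial-sum rearrangement $\sum_{k\le\tilde m_n}\E(\Delta_i\Delta_k)$ uniformly in $n$ — this is where the coupled case is genuinely more delicate than the independent case, in which telescoping of $\var$ gives absolute summability for free.

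The main obstacle is the careful accounting of the stratification-induced covariances: unlike the clean multinomial formula in Examples~\ref{ex:single}--\ref{ex:sum}, for $Z_\str^{(n)}$ and $Z_\sys^{(n)}$ the joint law of $(N_i)$ depends on where the $F(k)$ fall relative to the grid $\{j/n\}$, and one must argue that the resulting $O(1)$-per-pair boundary corrections contribute $o(1)$ to $n\,v_{0,\infty}$ after summation over all $i,k$. I would handle this by bounding $|\cov(N_i,N_k)|\le$ (number of grid cells meeting $[F(i-1),F(i)]\cup[F(k-1),F(k)]$) $\le 4$ uniformly, so that $\sum_{i,k}|\E\Delta_i\E\Delta_k|\,|\cov(N_i,N_k)|/(\E N_i\E N_k)\le \frac{4}{n^2}\bigl(\sum_i|\E\Delta_i|/p_i\bigr)^2$ — finite by Cauchy–Schwarz and Condition~\ref{cond:single-term} — hence $O(n^{-2})=o(n^{-1})$. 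The residual case is cleaner since $n_i$ is deterministic and only $r\le\sum_i 1$ (truncated) residual draws enter. Once these covariance bounds are in place, everything else is the dominated-convergence passage already rehearsed in Examples~\ref{ex:single} and~\ref{ex:sum}, invoking the subsequence lemma (Lemma~\ref{lem:subsequence}) in the coupled-sum case to handle the conditional convergence of the defining series. The detailed proof I would place in an appendix, as the paper does for its other technical results.
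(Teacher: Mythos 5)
Your overall strategy for part \eqref{item:residual-eff-single} matches the paper's: bound $|\cov(N_i,N_k)|$ by a quantity of order $r$ (residual) or $O(1)$ (stratified) instead of order $n$, so that the $\E\Delta_i\E\Delta_k$ cross terms in $v_{\ell,m}$ become negligible and only $\sum_i\var(\Delta_i)/p_i$ survives, with the MLMC side handled by $n/n_i\to 1/p_i$ and dominated convergence. However, your proposed resolution of the ``main obstacle'' contains a genuine error: you bound the full double sum by $\frac{4}{n^2}\big(\sum_i|\E\Delta_i|/p_i\big)^2$ and claim finiteness of $\sum_i|\E\Delta_i|/p_i$ from Condition \ref{cond:single-term}. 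This does not follow: Condition \ref{cond:single-term} gives $\sum_i\E\Delta_i^2/p_i<\infty$, but e.g.\ $|\E\Delta_i|=2^{-i}$, $p_i\propto 2^{-1.5i}$ satisfies it while $\sum_i|\E\Delta_i|/p_i\propto\sum_i 2^{0.5i}=\infty$. The cross-term bound is only usable after truncating to finitely many levels; one must first prove $\lim_n n\var(Z_*^{(n,m)})=\sum_{i\le m}\var(\Delta_i)/p_i$ for each \emph{fixed} $m$ (where the double sum is finite and the correction is $O(r/n)\cdot C_m\to 0$), and then control the tail $n\var(Z_*^{(n,m)}-Z_*^{(n,\ell)})$ uniformly in $n$ by comparing with the i.i.d.\ single-term tail variance via the stratification inequality (Lemma \ref{lem:strat}), which is summable by Condition \ref{cond:single-term}. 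You gesture at this via Proposition \ref{prop:consistency}, but that proposition bounds the whole estimator, not the tail increments, and your explicit argument replaces the needed two-step truncation with an unavailable global bound.

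The second, more substantive, problem is in parts \eqref{item:residual-eff-sum}--\eqref{item:eff-independent-sum}. You assert that the leading behaviour of $n\,v_{0,m}$ is $\sum_i\big(\E\Delta_i^2+2\sum_{k>i}\E(\Delta_i\Delta_k)\big)/\tilde{p}_i$ ``exactly as in Example \ref{ex:sum}'' and that this equals $\sigma_{\sumsymb,\infty}^2$. Both claims are false. That uncentred expression telescopes to $\sum_i[\E(Y_{i-1}-Y_m)^2-\E(Y_i-Y_m)^2]/\tilde{p}_i$, i.e.\ (as $m\to\infty$) to $\E(Z_\sumsymb^{(1)})^2$, which exceeds $\sigma_{\sumsymb,\infty}^2$ by $\sum_i[(\E Y-\E Y_{i-1})^2-(\E Y-\E Y_i)^2]/\tilde{p}_i$ --- a sum of nonnegative, generally positive terms (Remark \ref{rem:eff-diff}). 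The uncentred form arises in Example \ref{ex:sum} only because the i.i.d.\ scheme has $\cov(\tilde{N}_i,\tilde{N}_k)=\E\tilde{N}_{i\vee k}-n^{-1}\E\tilde{N}_i\E\tilde{N}_k$ of order $n$, which recombines $\cov(\Delta_i,\Delta_k)\E\tilde{N}_{i\vee k}$ with $\E\Delta_i\E\Delta_k\cov(\tilde{N}_i,\tilde{N}_k)$ into $\E(\Delta_i\Delta_k)\E\tilde{N}_{i\vee k}$. For the stratified/residual schemes $\cov(\tilde{N}_i,\tilde{N}_k)=O(r)$, no such recombination occurs, and the correct leading term is the \emph{centred} one, $\sum_i[\var(\Delta_i)+2\sum_{k>i}\cov(\Delta_i,\Delta_k)]/\tilde{p}_i$, identified with $\sigma_{\sumsymb,\infty}^2$ via $\var(\Delta_i)+2\cov(\Delta_i,Y_m-Y_i)=\var(Y_m-Y_{i-1})-\var(Y_m-Y_i)$. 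This distinction is the entire content of the theorem (the limit is strictly below $\var(Z_\sumsymb^{(1)})$), so writing the leading term with raw second moments proves the wrong statement. Finally, in the coupled case the head $Z_*^{(n,m_k)}$ and tail $Z_*^{(n,m_{k+j})}-Z_*^{(n,m_k)}$ are not independent, so the concluding limit argument additionally needs $n|\cov(A,B)|\le\sqrt{n\var(A)\,n\var(B)}$ alongside Lemma \ref{lem:subsequence}; this step is missing from your sketch.
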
 
\noindent The proof of Theorem \ref{thm:residual-eff} is given in Appendix
\ref{app:proof-eff}.

\begin{remark} 
    \label{rem:eff-diff} 
The limiting variances in Theorem \ref{thm:residual-eff} can be
significantly
lower than the upper bounds given in Proposition \ref{prop:consistency},
which correspond to averaging independent unbiased estimators.
Indeed, $\var(\Delta_i) = \E\Delta_i^2 -
(\E\Delta_i)^2$, so
\[
    \sigma_\infty^2 = \var(Z^{(1)}) - \bigg[ \sum_{i\ge 1}
    \frac{(\E\Delta_i)^2}{p_i} - (\E Y)^2\bigg].
\]
Likewise, in case of Condition \ref{cond:sum} \eqref{item:coupled-sum}, $\var(Y-Y_i) = \E(Y-Y_i)^2 - (\E Y - \E Y_i)^2$, so
\begin{align*}
    \sigma_{\sumsymb,\infty}^2
    &= \E\big(Z_\sumsymb^{(1)})^2 
    - \sum_{i\ge 1} \frac{(\E Y - \E Y_{i-1})^2 - (\E Y - \E
      Y_i)^2}{\tilde{p}_i} \\
    &= \var(Z_\sumsymb^{(1)}) - 
    \sum_{i\ge 1} (\E Y - \E Y_i)^2
    \bigg(\frac{1}{\tilde{p}_{i+1}}-\frac{1}{\tilde{p}_i}\bigg).
\end{align*}
Note that $\tilde{p}_{i+1}\le \tilde{p}_i$ for all $i$, so all
the terms in the sum are positive. 
\end{remark} 


\section{Expected cost and asymptotically optimal distribution}
\label{sec:cost} 

Let us next consider the cost of simulating estimators $Z$ of the form
given in Theorem \ref{thm:general}. Assume each 
$\Delta_i^{(j)}$ has random cost $K_{i,j}$, 
such that the total cost of $Z$ is $K \defeq \sum_{i\ge 1} \sum_{j=1}^{N_i}
K_{i,j}$. Assume also that $\{K_{i,j}\}_{i,j\ge 1}$ are independent of 
$(N_i)_{i\ge 1}$ and
$\{K_{i,j}\}_{j\ge 1}$ have a common mean $\kappa_i = \E K_{i,j} <\infty$. 
The expected cost can be written down as follows.
\[
    \E K = \sum_{i\ge 1} \mu_i \kappa_i
    \qquad\text{with}\qquad
    \mu_i \defeq \E N_i.
\]
The following records the immediate fact 
that the estimators 
introduced in Section \ref{sec:new} have
the same expected cost as the simple averages of independent estimators.
\begin{proposition} 
    \label{prop:cost-new} 
Denote the cost of the estimator
$Z_*^{(n)}$ by $K_*^{(n)}$, where  `$*$' is a place holder.
Then,
\begin{enumerate}[(i)]
\item $\E K_\str^{(n)}
    = \E K_\res^{(n)}
    = \E K_\sys^{(n)}
    = \E K_\iid^{(n)} = n \E K^{(1)} = n \sum_{i\ge 1} p_i \kappa_i$,
\item $\E K_{\sumsymb,\str}^{(n)}
    = \E K_{\sumsymb,\res}^{(n)}
    = \E K_{\sumsymb,\sys}^{(n)}
    = \E K_{\sumsymb,\iid}^{(n)} = n \E K_\sumsymb^{(1)}
    = n \sum_{i\ge 1} \tilde{p}_i \kappa_i$.
\end{enumerate}
\end{proposition}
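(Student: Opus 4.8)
The plan is to exploit the fact, recorded at the start of this section, that under the additive cost model $K = \sum_{i\ge 1}\sum_{j=1}^{N_i} K_{i,j}$ the expected cost $\E K = \sum_{i\ge 1}\mu_i\kappa_i$ of \emph{any} estimator of the form in Theorem \ref{thm:general} depends on the sampling scheme $(N_i)_{i\ge 1}$ only through the means $\mu_i = \E N_i$. Hence it suffices to verify that all four single-term schemes share the level counts $\E N_i = np_i$, and all four sum schemes share $\E\tilde N_i = n\tilde p_i$. Substituting into $\E K = \sum_i\mu_i\kappa_i$ then yields $n\sum_i p_i\kappa_i$ (resp.\ $n\sum_i\tilde p_i\kappa_i$), and these equal $n\E K^{(1)}$ (resp.\ $n\E K_\sumsymb^{(1)}$), since a single draw $R\sim(p_i)$ incurs cost with mean $\sum_i p_i\kappa_i$ for the single-term estimator $Z^{(1)}=\Delta_R/p_R$, and, by simulating $\Delta_1,\dots,\Delta_R$, mean $\sum_i\P(R\ge i)\kappa_i=\sum_i\tilde p_i\kappa_i$ for the sum estimator $Z_\sumsymb^{(1)}$.

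For the independent-levels schemes the identities $\E N_i = \sum_{j=1}^n\P(R^{(j)}=i) = np_i$ and $\E\tilde N_i = \sum_{j=1}^n\P(R^{(j)}\ge i) = n\tilde p_i$ are immediate from linearity of expectation. For residual sampling $\E N_i = n_i + r\,p_i^* = n_i + (np_i - n_i) = np_i$ directly from the definition $p_i^* = (np_i - n_i)/r$, and summing the corresponding identities over the levels $\ge i$ gives $\E\tilde N_i = n\tilde p_i$.

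The only point needing a short argument is the uniformly stratified and systematic schemes. Here I would isolate the elementary observation that if $U^{(1)},\dots,U^{(n)}$ place exactly one point in each cell $I_j \defeq \big((j-1)/n,\, j/n\big]$ of the equipartition of $(0,1]$ — independently with $U^{(j)}\sim U(I_j)$ (stratified), or by the common shift $U^{(j)} = (j-1)/n + U$ with $U\sim U(0,1/n)$, which never leaves $(0,1)$ (systematic) — then for every Borel set $A\subseteq(0,1]$ one has $\sum_{j=1}^n\P(U^{(j)}\in A) = n\sum_{j=1}^n |A\cap I_j| = n\,|A|$. Applying this with $A = (F(i-1),F(i)]$ gives $\E N_i = np_i$ for $Z_\str^{(n)}$ and $Z_\sys^{(n)}$, and with $A = (F(i-1),1]$ gives $\E\tilde N_i = n\tilde p_i$ for $Z_{\sumsymb,\str}^{(n)}$ and $Z_{\sumsymb,\sys}^{(n)}$, using $F(i)-F(i-1)=p_i$ and $1-F(i-1)=\tilde p_i$.

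I do not foresee a genuine obstacle: the content is pure bookkeeping. The only things to be careful about are the displayed partition identity — in the systematic case checking that the shifted grid $\{(j-1)/n+U\}_{j=1}^n$ stays inside $(0,1)$, so that no reduction modulo $1$ is involved — and the modelling assumption flagged at the start of the section, namely that the work for distinct $\Delta_i^{(j)}$ does not overlap, which is precisely the setting in which the proposition is asserted.
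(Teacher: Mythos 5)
Your proof is correct and takes exactly the route the paper intends: the paper states this proposition without proof as an ``immediate fact,'' relying precisely on the observation that $\E K=\sum_i \mu_i\kappa_i$ depends on the randomisation scheme only through $\mu_i=\E N_i$, and that all four single-term (resp.\ sum) schemes share $\E N_i=np_i$ (resp.\ $\E\tilde N_i=n\tilde p_i$). Your explicit verification of the level means for the stratified, systematic and residual schemes is the bookkeeping the paper leaves to the reader, and it is carried out correctly.
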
 
The following proposition records that both of the MLMC estimators 
introduced in Definition \ref{def:mlmc} 
have a cost that is asymptotically equivalent to 
the corresponding unbiased estimators, which are based on 
the same limiting allocation strategy.
\begin{proposition} 
    \label{prop:mlmc-expected-cost} 
  The expected cost of MLMC estimators in Theorem \ref{thm:residual-eff}
  satisfy
  \[
      \lim_{n\to\infty} n^{-1} \E K_\ml^{(n)}
      = {\textstyle \sum_{i\ge 1}} p_i \kappa_i
      \qquad\text{and}\qquad
      \lim_{n\to\infty} n^{-1} \E K_{\sumsymb,\ml}^{(n)}
      = {\textstyle \sum_{i\ge 1}} \tilde{p}_i \kappa_i.
  \]
\end{proposition}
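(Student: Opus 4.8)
The plan is to compute $\E K_\ml^{(n)}$ and $\E K_{\Sigma,\ml}^{(n)}$ directly from Definition~\ref{def:mlmc}, exploiting the independence of the costs $\{K_{i,j}\}$ from the (here deterministic) allocation and the fact that $\kappa_i = \E K_{i,j}$. For the single-term MLMC estimator $Z_\ml^{(n)}$, the cost is $K_\ml^{(n)} = \sum_{i=1}^{m_n} \sum_{j=1}^{n_i} K_{i,j}$ with $n_i = \lfloor n p_i\rfloor$, so
\[
    \E K_\ml^{(n)} = \sum_{i\ge 1} \lfloor n p_i \rfloor \kappa_i,
\]
and similarly $\E K_{\Sigma,\ml}^{(n)} = \sum_{i\ge 1} \lfloor n \tilde p_i \rfloor \kappa_i$. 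Dividing by $n$, the claim reduces to showing
\[
    \lim_{n\to\infty} \sum_{i\ge 1} \frac{\lfloor n p_i\rfloor}{n}\,\kappa_i
    = \sum_{i\ge 1} p_i \kappa_i,
\]
and the analogous statement with $p_i$ replaced by $\tilde p_i$.

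The key step is to justify the interchange of the limit and the infinite sum. Termwise, $\lfloor n p_i\rfloor / n \to p_i$ for each fixed $i$, so if $\sum_i p_i\kappa_i = \infty$ the identity holds by Fatou's lemma (the left side diverges as well, being bounded below by any finite partial sum's limit), and if $\sum_i p_i\kappa_i<\infty$ we invoke dominated convergence for series: the bound $\lfloor n p_i\rfloor/n \le p_i$ holds for every $n$ and every $i$, giving the dominating summable function $p_i\kappa_i$. Hence $\lim_{n\to\infty} n^{-1}\E K_\ml^{(n)} = \sum_{i\ge 1} p_i\kappa_i$. For $Z_{\Sigma,\ml}^{(n)}$ the argument is identical with $p_i \rightsquigarrow \tilde p_i$, noting $\sum_i \tilde p_i\kappa_i$ may itself be infinite, in which case the stated limit is $+\infty$ and again follows from Fatou; the dominating bound $\lfloor n\tilde p_i\rfloor/n \le \tilde p_i$ is immediate.

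There is essentially no obstacle here; the only point requiring a moment's care is that the right-hand sides $\sum_i p_i\kappa_i$ and $\sum_i \tilde p_i\kappa_i$ need not be finite, so the statement must be read as an identity in $[0,\infty]$, and the convergence argument must be split according to finiteness (dominated convergence when finite, Fatou otherwise). One should also note that the independence assumption on $\{K_{i,j}\}_{i,j}$ from $(N_i)$ is used only trivially, since in Definition~\ref{def:mlmc} the allocations $n_i$ and $\hat n_i$ are deterministic given $n$; the content is purely the elementary floor-function limit together with the monotone bound $\lfloor x\rfloor \le x$.
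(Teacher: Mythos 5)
Your proof is correct and follows essentially the same route as the paper: write the expected cost as $\sum_{i\ge 1}\lfloor np_i\rfloor\kappa_i$, use the termwise limit $\lfloor np_i\rfloor/n\to p_i$ together with the domination $\lfloor np_i\rfloor/n\le p_i$, and conclude by dominated convergence when $\sum_i p_i\kappa_i<\infty$ and Fatou's lemma otherwise. The same argument with $\tilde p_i$ in place of $p_i$ handles $K_{\sumsymb,\ml}^{(n)}$, exactly as in the paper.
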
 
\proof{Proof.} 
The expected cost of $Z_\ml^{(n)}$ is 
\[
    \textstyle
    \sum_{i\ge 1} \charfun{n_i>0} n_i \kappa_i, \qquad
    \text{where}\qquad 
    n_i \defeq \lfloor n p_i \rfloor.
\]
Clearly $n_i/n\le p_i$, $n_i/n\to p_i$ and $\charfun{n_i>0}\to 1$ as
$n\to\infty$, 
so if $\sum_{i\ge 1}^\infty p_i \kappa_i <\infty$, the result follows by 
dominated convergence; otherwise one can use Fatou's lemma.
The proof with $\tilde{K}_\ml^{(n)}$ is identical. \Halmos
\endproof 

\cite{glynn-whitt} formulate an asymptotic efficiency
principle, which states that if $(Z^{(j)})_{j\ge 1}$ are
i.i.d.~estimators with finite variance
$\sigma^2<\infty$ and with expected cost $\kappa<\infty$, then the
average of such estimators has asymptotic relative 
efficiency $[\kappa \sigma^2]^{-1}$.
Considering this notion of efficiency,
let $n\in\N$, and consider the following estimator, which is 
the average of $m$ independent realisations of the estimators above,
\begin{equation}
    \mathcal{Z}_*^{(n,m)} \defeq 
    \frac{1}{m} \sum_{j=1}^m Z_*^{(n,j)},
    \label{eq:avg-iid-unbiased}
\end{equation}
where $\{Z_*^{(n,j)}\}_{j\ge 1}$ 
are independent realisations of $Z_*^{(n)}$
and `$*$' is a place holder.
By letting $m\to\infty$, one may consider the asymptotic efficiency 
of this estimator as in the following
result, stating that stratified and residual sampling procedures
always improve upon averages of independent single term and sum
estimators, and can be made arbitrarily close to MLMC in asymptotic efficiency.

\begin{corollary} 
    \label{cor:batch-complexity} 
Suppose $n\in\N$ is fixed, then the following asymptotic efficiency
holds when $m\to\infty$.
\begin{enumerate}[(i)]
    \item If Condition \ref{cond:single-term} holds and
      $\sum_{i\ge 1} p_i \kappa_i<\infty$, then 
      the asymptotic efficiency of both
      $\mathcal{Z}_{\str}^{(n,m)}$ and $\mathcal{Z}_{\res}^{(n,m)}$ 
      is no worse than 
            that of $Z^{(mn)}_{\iid}$, and
can be made arbitrarily close to that of $Z_\ml^{(mn)}$ 
     by choosing $n$ sufficiently large.
    \item If Condition \ref{cond:sum} holds and
      $\sum_{i\ge 1} \tilde{p}_i \kappa_i<\infty$,
         then the asymptotic efficiency of both
      $\mathcal{Z}_{\sumsymb,\str}^{(n,m)}$ and 
      $\mathcal{Z}_{\sumsymb,\res}^{(n,m)}$ 
      is no worse than 
            that of $Z_{\sumsymb,\iid}^{(mn)}$, and
can be made arbitrarily close to that of $Z_{\sumsymb,\ml}^{(mn)}$
     by choosing $n$ sufficiently large.
\end{enumerate}
\end{corollary}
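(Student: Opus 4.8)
The plan is to reduce the statement to an elementary comparison of cost--variance products, using the Glynn--Whitt efficiency principle quoted just above together with Propositions \ref{prop:consistency}, \ref{prop:cost-new}, \ref{prop:mlmc-expected-cost} and Theorem \ref{thm:residual-eff}. First I would note that $\mathcal{Z}_*^{(m,n)}$ is by construction the average of $m$ i.i.d.\ copies of $Z_*^{(n)}$, each carrying i.i.d.\ cost, so for fixed $n$ the efficiency principle gives that its asymptotic efficiency as $m\to\infty$ equals $[\E K_*^{(n)}\,\var(Z_*^{(n)})]^{-1}$, which is independent of $m$; under Condition \ref{cond:single-term} (resp.\ \ref{cond:sum}) and the assumed summability of $(p_i\kappa_i)$ (resp.\ $(\tilde p_i\kappa_i)$) both factors are finite. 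The estimators $Z_\iid^{(mn)}$ and $Z_\ml^{(mn)}$ are sequences indexed by the running-time parameter $mn$, so their ``asymptotic efficiency as $m\to\infty$'' is to be read as $\lim_{m\to\infty}[\E K^{(mn)}\,\var(Z^{(mn)})]^{-1}$; the whole proof then concerns these numbers.

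For part (i), the comparison with $Z_\iid^{(mn)}$ is immediate: Proposition \ref{prop:cost-new}(i) gives $\E K_*^{(n)} = n\bar\kappa$ with $\bar\kappa\defeq\sum_{i\ge 1}p_i\kappa_i$, and Proposition \ref{prop:consistency}(i) gives $\var(Z_*^{(n)})\le n^{-1}\var(Z^{(1)})$, so the efficiency of $\mathcal{Z}_*^{(m,n)}$ is at least $[\bar\kappa\,\var(Z^{(1)})]^{-1}$; by the same results (or Example \ref{ex:single}) $Z_\iid^{(mn)}$ is itself an average of $mn$ i.i.d.\ copies of $Z^{(1)}$ and has efficiency exactly $[\bar\kappa\,\var(Z^{(1)})]^{-1}$. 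For the MLMC comparison, Proposition \ref{prop:mlmc-expected-cost} gives $n^{-1}\E K_\ml^{(n)}\to\bar\kappa$ and Theorem \ref{thm:residual-eff}\eqref{item:residual-eff-single} gives $n\var(Z_\ml^{(n)})\to\sigma_\infty^2$, whence $\E K_\ml^{(n)}\,\var(Z_\ml^{(n)})\to\bar\kappa\sigma_\infty^2$ and the asymptotic efficiency of $Z_\ml^{(mn)}$ equals $[\bar\kappa\sigma_\infty^2]^{-1}$. On the other side, Proposition \ref{prop:cost-new}(i) shows the efficiency of $\mathcal{Z}_*^{(m,n)}$ is exactly $[\bar\kappa\cdot n\var(Z_*^{(n)})]^{-1}$, and Theorem \ref{thm:residual-eff}\eqref{item:residual-eff-single} gives $n\var(Z_*^{(n)})\to\sigma_\infty^2$ as $n\to\infty$; hence this efficiency converges to $[\bar\kappa\sigma_\infty^2]^{-1}$ and can be brought within any prescribed tolerance of the MLMC value by taking $n$ large.

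Part (ii) is the verbatim analogue with $\tilde p_i$ replacing $p_i$, using Proposition \ref{prop:cost-new}(ii) ($\E K_{\sumsymb,*}^{(n)} = n\tilde\kappa$, $\tilde\kappa\defeq\sum_{i\ge 1}\tilde p_i\kappa_i$), Proposition \ref{prop:consistency}(ii), the sum-estimator part of Proposition \ref{prop:mlmc-expected-cost}, and Theorem \ref{thm:residual-eff}\eqref{item:residual-eff-sum} under Condition \ref{cond:sum}\eqref{item:coupled-sum} or \eqref{item:eff-independent-sum} under Condition \ref{cond:sum}\eqref{item:independent-sum}. The one delicate point is the coupled-sum case: Theorem \ref{thm:residual-eff}\eqref{item:residual-eff-sum} gives $n\var(Z_{\sumsymb,*}^{(n)})\to\sigma_{\sumsymb,\infty}^2$ unconditionally, which is all that is needed for the stratified/residual side, but it identifies the limiting MLMC variance only under the extra hypothesis $\sum_{i\ge 1}\sup_{k\ge i}\var(Y_{k-1}-Y)/\tilde p_k<\infty$; so that hypothesis should be in force for the coupled-sum MLMC comparison (otherwise $[\tilde\kappa\sigma_{\sumsymb,\infty}^2]^{-1}$ is still the limiting efficiency of $\mathcal{Z}_{\sumsymb,*}^{(m,n)}$ as $n\to\infty$, but should be read as the efficiency of the idealised MLMC allocation rather than of $Z_{\sumsymb,\ml}^{(mn)}$).

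The point to stress is that there is no genuine analytic obstacle here: every limit needed is already supplied by the cited propositions and by Theorem \ref{thm:residual-eff}. The only real work is bookkeeping --- keeping the $n$- and $m$-dependence separate, and making precise, via the efficiency principle, what ``asymptotic efficiency of $Z_\iid^{(mn)}$'' and ``of $Z_\ml^{(mn)}$'' mean in the limit $m\to\infty$ --- together with flagging the coupled-sum caveat in part (ii).
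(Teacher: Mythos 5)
Your proposal is correct and follows exactly the route the paper intends: the corollary is left without an explicit proof precisely because it is the assembly, via the Glynn--Whitt efficiency principle, of Proposition \ref{prop:cost-new}, Proposition \ref{prop:consistency}, Proposition \ref{prop:mlmc-expected-cost} and Theorem \ref{thm:residual-eff}, which is what you do. Your remark that the coupled-sum MLMC comparison in part (ii) tacitly requires the extra hypothesis $\sum_{i\ge 1}\sup_{k\ge i}\var(Y_{k-1}-Y)/\tilde p_k<\infty$ from Theorem \ref{thm:residual-eff}\eqref{item:residual-eff-sum} is an accurate and worthwhile observation about the statement as written.
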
 
\begin{remark} 
Strictly speaking, the use of asymptotic efficiency principle requires that
the multilevel estimators $Z_{\sumsymb,\iid}^{(mn)}$ would be guaranteed to 
satisfy a functional central limit theorem, which is out of the scope
of this work; see, however, the recent works
of \cite{alaya-kebaier} and \cite{zheng-glynn} in this
direction in a different setting.
\end{remark} 

If we assume $n$ is taken sufficiently large so that 
the limits in Theorem \ref{thm:residual-eff} are approximately attained,
the asymptotic efficiency principle suggests 
a rule for tuning the distribution $(p_i)$ 
for stratified and residual sampling distributions: the distribution 
$(p_i)_{i\ge 1}$ which minimises 
the asymptotic inverse relative efficiency (IRE)
$\sigma_\infty^2 \E K_*^{(n)}$ or 
$\sigma_{\sumsymb,\infty}^2 \E K_{\sumsymb,*}^{(n)}$ maximises the
efficiency. For the single term estimator, Condition
\ref{cond:single-term}, this leads to
\begin{equation}
    \min_{(p_i)_{i\ge 1}}
    \bigg(\sum_{i=1}^\infty
    \frac{\var(\Delta_i)}{p_i}\bigg)\bigg(\sum_{i=1}^\infty p_i
    \kappa_i\bigg).
    \label{eq:efficiency-criterion}
\end{equation}
The solution to 
\eqref{eq:efficiency-criterion} is proportional to $\beta_i \defeq
\sqrt{\var(\Delta_i)/\kappa_i}$, if $\sum_i \beta_i<\infty$ 
\citep[Proposition 1]{rhee-glynn}. 
This is
straightforward to implement in practice, because reliable estimates
of $\var(\Delta_i)$ are easily available. A straightforward practical
procedure, also implemented in the experiments below, 
is to use the `empirical optimal 
distribution' $\beta_i/\sum_i\beta_i$ to define directly the first $m$
probabilities $p_1,\ldots,p_m$ and a tail probability $\sum_{j>m} p_j$.
The tail probabilities $p_j$ for $j>m$ are then defined to follow a 
parametric distribution which guarantees a finite variance
based on theory. In the experiments, the tail distribution
was chosen to be geometric.

In case of the independent sum estimator, Condition \ref{cond:sum}
\eqref{item:independent-sum}, similar minimisation
\eqref{eq:efficiency-criterion} with $\tilde{p}_i$ in place of $p_i$
yields the asymptotically optimal distribution $p_i =
\tilde{p_i}-\tilde{p}_{i+1}$.
However, contrary to the single-term estimator, $(\tilde{p}_i)_{i\ge
1}$ must additionally be non-increasing. Because
\eqref{eq:efficiency-criterion} is invariant under multiplicative
constants on $(p_i)_{i\ge 1}$, the independent sum estimator can never
outperform the single term estimator, in terms of asymptotic IRE.

The coupled sum estimator, Condition \ref{cond:sum}
\eqref{item:coupled-sum}, leads to the optimisation problem
\begin{equation*}
    \min_{(\tilde{p}_i)_{i\ge 1}}
    \bigg(\sum_{i=1}^\infty
    \frac{\var(Y-Y_{i-1})-\var(Y-Y_i)}{\tilde{p}_i}\bigg)\bigg(\sum_{i=1}^\infty
    \tilde{p}_i
    \kappa_i\bigg).
\end{equation*}
This is more involved for two reasons. Estimation of the terms
$\var(Y-Y_i)$ is not straightforward. In practice, a reasonable
(but expensive) approximation may be obtained by approximating $Y$ with $Y_m$, where
$m$ is large. As noted in \cite[§3]{rhee-glynn}, 
the coupled sum estimator offers one more degree of freedom.
Namely, it is possible to consider a coupled sum estimator for a
subsequence of the variables $(Y_i)$, that is,
employing level differences 
$\tilde{\Delta}_i^{(J)} \defeq
Y_{J_i} - Y_{J_{i-1}}$, where $J = (0=J_0 < J_1<\cdots)$.
Then, the optimisation would require solving
\begin{equation*}
    \min_{J} 
    \min_{(\tilde{p}_i)_{i\ge 1}}
    \bigg(\sum_{i=1}^\infty
    \frac{\var(Y-Y_{J_{i-1}})-\var(Y-Y_{J_i})}{\tilde{p}_i}\bigg)\bigg(\sum_{i=1}^\infty
    \tilde{p}_i
    \kappa_{J_i}\bigg).
\end{equation*}
This leads to a combinatorial problem, which is discussed in depth in
\cite[§3]{rhee-glynn}, who also describe a dynamic programming 
algorithm which finds such optimal $J$, up to an index $m$, in
$O(m^3)$ time.

The asymptotic optimality results above suggest that when considering 
an average of estimators of the form \eqref{eq:avg-iid-unbiased}, it
is possible to sequentially refine the randomisation distribution
$(p_i)_{i\ge 1}$ used for $Z_*^{(n,j)}$ based on the random variables
generated for the previous estimators
$Z_*^{(n,1)},\ldots,Z_*^{(n,j-1)}$, as suggested by \cite{rhee-glynn}.
In particular, the probabilities $(p_i)_{i\ge 1}$ of single term
estimators could be chosen based on empirical variances of level $i$
variables generated for previous estimators. Then, $Z_*^{(n,j)}$ would
not be independent, but remain unbiased, and in fact
$(Z_*^{(n,j)}-\E Y)_{j\ge 1}$ would be martingale differences.

In some applications it is not possible to choose $(p_i)_{i\ge 1}$
which yield both finite variance and finite expected cost. Then, it is
not possible to attain a canonical square root convergence rate, but
\cite{rhee-glynn} suggest another approach: choose
$(p_i)_{i\ge 1}$ that ensure finite variance, but which imply infinite
expected cost. Using a result due to \cite{feller-inf}, they
deduce complexity results for unbiased estimators which are close to
what are possible with MLMC; see also \cite{zheng-blanchet-glynn}. However, quantifying the
efficiency in the present setting is not straightforward, so this is left for
future work.


\section{Generalised unbiased scheme and dependent randomisation}
\label{sec:stopping} 

The general unbiased scheme proposed in Theorem \ref{thm:general} is
based on independent randomisation, that is, $(N_i)_{i\ge 1}$ are 
assumed independent of $(\Delta_i^{(j)})$. Such a scheme is often
appropriate in practice, but it is also possible to think of cases
where $(N_i)_{i\ge 1}$ could depend on $(\Delta_i^{(j)})$, for
instance in a stopping time fashion. It is also possible to retain
unbiasedness while replacing $\E N_i$ in the estimator by a
conditional expectation. We consider below a scheme
which accomodates both of these generalisations, while retaining
unbiasedness.

\begin{condition} 
    \label{cond:stopping-dependence} 
    Suppose $(\Delta_i^{(j)})_{i,j\ge 1}$, 
    $(Y_i)_{i\ge 0}$ and $Y$ are integrable random variables, 
    $\E Y_0=0$ and $\lim_{i\to\infty} \E Y_i = \E Y\in \R$, 
    $(\F_i)_{i\ge 0}$ are $\sigma$-algebras, and
    $(N_i)_{i\ge 1}$ are non-negative integer-valued random variables,
    such that $\E(N_i\mid \F_{i-1})\in(0,\infty)$ 
  almost surely for all
        $i\ge 1$.
\end{condition} 

\begin{condition} 
    \label{cond:stopping-dependence2} 
Suppose Condition \ref{cond:stopping-dependence} holds, and
for all $i,j\ge 1$:
\begin{enumerate}[(i)]
\item \label{item:general-consistency}
  $\E\big(\Delta_i^{(j)}\charfun{j\le N_i}\;\big|\;\F_{i-1}\big)
        = (\E Y_i - \E Y_{i-1}) \P(j\le N_i\mid \F_{i-1})$
  almost surely.
\item \label{item:general-bound}
  There exists random variables $C_i$ 
        such that
        $\E\big(|\Delta_i^{(j)}|\charfun{j\le N_i}\;\big|\;\F_{i-1}\big)
        \le  C_i \P(j\le N_i\mid \F_{i-1})$
        and with $c_i \defeq \E C_i<\infty$.
\item \label{item:general-summability}
  $\sum_{i=1}^\infty c_i <\infty$ 
      and $\sum_i
      N_i<\infty$ almost surely.
\end{enumerate}
\end{condition} 

\begin{remark} 
    \label{rem:stopping} 
Condition \ref{cond:stopping-dependence2} allows $(N_i)_{i\ge 1}$ to depend on 
$(\Delta_i^{(j)})_{i,j}$ in a stopping time fashion.
Namely, suppose that Condition \ref{cond:stopping-dependence} holds
and $(\Delta_i^{(j)})_{i,j}$ are as in Theorem \ref{thm:general}.
\begin{enumerate}[(i)]
    \item If $(\G_i^{(j)})_{i\ge 1,j\ge 0}$ are 
      $\sigma$-algebras such that $\F_{i-1}\subset \G_i^{(j)}$, $\{ N_i \le j \} \in \G_i^{(j)}$ 
      and $\Delta_i^{(j)}$ is independent of $\G_i^{(j-1)}$ for
      all $i,j\ge 1$, then, Condition \ref{cond:stopping-dependence2}
      \eqref{item:general-consistency} holds because
\begin{align*}
    \E(\Delta_i^{(j)} \charfun{j\le N_i}\mid \F_{i-1})
    &= \E\big[\E(\Delta_i^{(j)}\mid \G_i^{(j-1)}) \charfun{j\le
      N_i}\bigmid \F_{i-1}\big] 
    = \E \Delta_i \P( j\le N_i \mid \F_{i-1}).
\end{align*}
Condition \ref{cond:stopping-dependence2}
\eqref{item:general-bound} holds similarly, 
with $c_i = C_i = \E |\Delta_i|$.
\item In particular, if $(\Delta_i^{(j)})_{i,j}$ are independent and 
  $N_i$ are stopping times with respect to
$(\Delta_i^{(j)})$, in the sense that $\{ N_i \le j \} \in
\G_i^{(j)} \defeq 
\sigma(\Delta_k^{(1)},\ldots,\Delta_k^{(j)}\given 1\le k\le i)$
and $\F_{i-1}\subset \G_i^{(j)}$ for all $i,j\ge 1$, then
Condition \ref{cond:stopping-dependence2} holds as above.
\item If $c_i = \E|\Delta_i|$ and
  Condition \ref{cond:single-term} holds for some probability
  distribution $(p_i)_{i\ge 1}$, then 
  $\sum_{i\ge 1} c_i = \sum_{i\ge 1} c_i (\charfun{c_i
    \le p_i} + \charfun{c_i> p_i}) \le 1 + \sum_{i\ge 1} c_i^2/p_i<\infty$.
\end{enumerate}
\end{remark} 

\begin{theorem} 
    \label{thm:dependent-randomisation} 
Assume Condition \ref{cond:stopping-dependence} holds and denote
\[
    Z_m \defeq \sum_{i=1}^m \frac{1}{\E(N_i\mid \F_{i-1})}
          \sum_{j=1}^\infty \Delta_i^{(j)}\charfun{j\le N_i},
\]
and $Z\defeq \lim_{m\to\infty} Z_m$ whenever well-defined.
\begin{enumerate}[(i)]
    \item 
      \label{item:stopping-gen} 
      If Condition \ref{cond:stopping-dependence2}
     \eqref{item:general-consistency} and 
      \eqref{item:general-bound} hold, then
$\E Z_m = \E Y_m$. If
      also Condition \ref{cond:stopping-dependence2}
      \eqref{item:general-summability} holds, then
      $\E Z = \E Y$.
    \item \label{item:stopping-indep}
      Suppose the assumptions of Theorem \ref{thm:general} hold
      and $(\Delta_i^{(j)})_{i,j}$ are independent of
      $(\F_i)_{i\ge 0}$. If
      \begin{align*}
    v_{\ell,m} \defeq 
    \sum_{i,k=\ell+1}^m \bigg[ &
    \cov(\Delta_i,\Delta_k)
    \E\bigg(\frac{N_i\wedge
      N_k}{\E(N_i\mid\F_{i-1})\E(N_k\mid\F_{k-1})}\bigg) \\
    &+
    \E\Delta_i\E\Delta_k\E\bigg(
      \frac{N_i N_k }{\E (N_i\mid\F_{i-1}) \E (N_k\mid\F_{k-1})} -1\bigg)\bigg]
      <\infty
     \end{align*}
     for all $0\le \ell<k<\infty$ and 
     $\lim_{k\to\infty}\sup_{j\ge 1} v_{m_k,m_{k+j}}=0$
     for some subsequence $(m_k)_{k\ge 1}$, then $\E Z = \E Y$ and 
     $\var(Z) = \lim_{k\to\infty} v_{0,m_k}$.
\end{enumerate}
\end{theorem}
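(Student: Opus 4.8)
The overall plan is to re-run the proof of Theorem~\ref{thm:general} with conditional in place of ordinary expectations, handling the two parts separately.

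\textbf{Part~\eqref{item:stopping-gen}.} Since $\sum_{j\ge1}\Delta_i^{(j)}\charfun{j\le N_i}=\sum_{j=1}^{N_i}\Delta_i^{(j)}$, each summand of $Z_m$ is a finite (random) sum that I would treat by conditioning on $\F_{i-1}$. Condition~\ref{cond:stopping-dependence2}\eqref{item:general-bound} gives
\[
    \sum_{j\ge1}\E\big(|\Delta_i^{(j)}|\charfun{j\le N_i}\bigmid\F_{i-1}\big)
    \le C_i\,\E(N_i\mid\F_{i-1})<\infty\quad\text{a.s.},
\]
which both yields $\E|Z_m|\le\sum_{i=1}^m c_i<\infty$ and licenses interchanging $\sum_j$ with $\E(\,\cdot\mid\F_{i-1})$; combining this with Condition~\ref{cond:stopping-dependence2}\eqref{item:general-consistency} and the identity $\sum_{j\ge1}\P(j\le N_i\mid\F_{i-1})=\E(N_i\mid\F_{i-1})$ (conditional monotone convergence) shows the conditional mean of the $i$-th summand of $Z_m$ equals $\E Y_i-\E Y_{i-1}$ a.s. Taking expectations and telescoping (using $\E Y_0=0$) gives $\E Z_m=\E Y_m$. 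For $Z=\lim_m Z_m$, Condition~\ref{cond:stopping-dependence2}\eqref{item:general-summability} makes $Z_m$ eventually constant along a.e.\ path, so the limit exists, and $|Z_m|\le W\defeq\sum_{i\ge1}\frac{1}{\E(N_i\mid\F_{i-1})}\sum_{j=1}^{N_i}|\Delta_i^{(j)}|$ with $\E W\le\sum_i c_i<\infty$ by the same estimate; dominated convergence then gives $\E Z=\lim_m\E Y_m=\E Y$.

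\textbf{Part~\eqref{item:stopping-indep}.} Here I would condition on $\G\defeq\sigma\big((\F_i)_{i\ge0},(N_i)_{i\ge1}\big)$: by hypothesis $(\Delta_i^{(j)})_{i,j}$ is independent of $\G$, so conditional on $\G$ these remain independent copies of $(\Delta_i)_{i\ge1}$ while the $N_i$ and $\E(N_i\mid\F_{i-1})$ are frozen. Unbiasedness is then immediate: $\E(Z_m\mid\G)=\sum_{i=1}^m\frac{N_i}{\E(N_i\mid\F_{i-1})}\E\Delta_i$ and $\E\big(N_i/\E(N_i\mid\F_{i-1})\big)=1$ by the tower property, so $\E Z_m=\sum_{i=1}^m\E\Delta_i=\E Y_m$. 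For the second moment I would compute $\E[(Z_m-Z_\ell)^2\mid\G]$ for $0\le\ell<m$; the only non-routine step is evaluating $\sum_{j\le N_i}\sum_{j'\le N_k}\E(\Delta_i^{(j)}\Delta_k^{(j')})$, which equals $(N_i\wedge N_k)\cov(\Delta_i,\Delta_k)+N_iN_k\,\E\Delta_i\E\Delta_k$ after separating the $j=j'$ diagonal ($N_i\wedge N_k$ pairs, each contributing $\E(\Delta_i\Delta_k)$) from the off-diagonal (contributing $\E\Delta_i\E\Delta_k$). Taking expectations and subtracting $\sum_{i,k=\ell+1}^m\E\Delta_i\E\Delta_k=(\E Y_m-\E Y_\ell)^2$ yields
\[
    \E\big[(Z_m-Z_\ell)^2\big]=v_{\ell,m}+(\E Y_m-\E Y_\ell)^2,
\]
the assumed finiteness of the $v_{\ell,m}$ guaranteeing that each summand of $Z_m-Z_\ell$ is in $L^2$, so the conditioning manipulations are valid.

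With this identity in hand, along the subsequence $(m_i)$ one has $\|Z_{m_{i+k}}-Z_{m_i}\|_2^2=v_{m_i,m_{i+k}}+(\E Y_{m_{i+k}}-\E Y_{m_i})^2$, which tends to $0$ uniformly in $k$ since $\sup_k v_{m_i,m_{i+k}}\to0$ and $(\E Y_n)$ is Cauchy; hence $(Z_{m_i})$ converges in $L^2$. Because $\sum_i N_i<\infty$ a.s.\ (an assumption inherited from Theorem~\ref{thm:general}), the full sequence $Z_m$ is a.s.\ eventually constant, so $Z=\lim_m Z_m$ exists a.s.\ and coincides with this $L^2$ limit; in particular $Z\in L^2$ and $\E Z=\lim_i\E Y_{m_i}=\E Y$. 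Finally, taking $\ell=0$ gives $\var(Z_{m_i})=\E[Z_{m_i}^2]-(\E Y_{m_i})^2=v_{0,m_i}$, and $L^2$ convergence yields $\var(Z)=\lim_i\var(Z_{m_i})=\lim_i v_{0,m_i}$.

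\textbf{Expected main obstacle.} No single computation is deep; the effort lies in the measure-theoretic bookkeeping: justifying the conditional-Fubini interchanges when $N_i$ is random (the $C_i$-domination is exactly what is needed), choosing the right conditioning $\sigma$-algebra at each stage ($\F_{i-1}$ for the conditional-unbiasedness identities, the larger $\G$ for the second moment so the $N$'s and $\F$'s are ``frozen''), isolating the $j=j'$ diagonal in the double sum correctly, and matching the $L^2$ limit along $(m_i)$ with the pathwise limit of the full sequence.
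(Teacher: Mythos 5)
Your proposal is correct and follows essentially the same route as the paper: the $C_i$-domination plus conditional dominated/monotone convergence for part (i), and the diagonal/off-diagonal split of the double sum (yielding $\E(Z_m-Z_\ell)^2=v_{\ell,m}+(\E Y_m-\E Y_\ell)^2$), the $L^2$-Cauchy argument along $(m_i)$, and the a.s.\ eventual constancy from $\sum_i N_i<\infty$ for part (ii). The only cosmetic differences are that the paper obtains $\E Z_m=\E Y_m$ in part (ii) by invoking part (i) with $c_i=C_i=\E|\Delta_i|$ rather than recomputing it, and conditions on $\sigma(\F_{i-1},\F_{k-1})$ pairwise rather than on your global $\G$; both organize the same calculation.
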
 
Proof of Theorem \ref{thm:dependent-randomisation} is given in Appendix 
\ref{app:proof-dependent-randomisation}.

Theorem \ref{thm:dependent-randomisation} \eqref{item:stopping-indep}
is a generalisation of Theorem \ref{thm:general}, and leads to new,
potentially interesting estimators. For instance, let $N_i \defeq
\sum_{j=1}^n \charfun{R^{(j)}=i}$ where $R^{(1)},\ldots,R^{(n-1)}$ are
positive random integers independent of $R^{(n)}\sim (p_i)_{i\ge 1}$.
If we take $\F_i\defeq \sigma(R^{(1)},\ldots,R^{(n-1)})$ for all $i\ge
0$, then $\E(N_i\mid \F_{i-1}) = p_i + \sum_{j=1}^{n-1}
\charfun{R^{(j)}=i}$. This can be viewed as a residual sampling scheme
applied with the (random) probability distribution $\hat{p}_i \defeq
\E(N_i\mid \F_{i-1})/n$. Analogously, the residual sampling scheme may
be viewed through such conditioning, where $R^{(1)},\ldots,R^{(n-r)}$
are deterministic. It is unclear whether Theorem
\ref{thm:dependent-randomisation} \eqref{item:stopping-indep} allows
estimators that have practical appeal, such as greater efficiency
compared with the estimators introduced in Section \ref{sec:new}.

Theorem \ref{thm:dependent-randomisation} \eqref{item:stopping-gen}
with Remark \ref{rem:stopping} is similar to Wald's identity.
The stopping time formulation may prove theoretically useful, and 
suggests a possibility to be used in conjunction with stopping rules
developed in the MLMC context
\cite[e.g.][]{collier-haji-ali-nobile-von-scwerin-tempone}.
However, the practical relevance of such an approach may be
limited, because the expectation of a stopping time is often
unavailable. It appears also difficult to derive useful explicit 
variance expressions when $(N_i)$ depend on
$(\Delta_i^{(j)})$. The sequential refinement of $(p_i)_{i\ge 1}$
during repeated simulation of unbiased estimators, 
as discussed in Section \ref{sec:cost}, could be used instead.


\section{Numerical experiments} 
\label{sec:numerical} 

The performance of the proposed estimators was studied with four SDE examples,
where expectations of final-value functionals are estimated.
Three of the models were the same as those used by
\cite{rhee-glynn}. The first is a \emph{geometric Brownian motion} (gBM)
\begin{equation*}
    \ud X_t = \mu X_t \ud t + \sigma X_t \ud B_t, \qquad
    t\in[0,1],\qquad X_0\equiv 1,
\end{equation*}
with $\mu=0.05$ and $\sigma=0.2$, where $(B_t)_{t\ge 0}$ is the
standard Brownian motion. The target functional is the 
final value European option
$f(x)=e^{-\mu}(x-1)_+$ with approximate expected value
$\E f(X_1) = 0.104505836$ \citep{rhee-glynn}.

The second model was a \emph{Cox-Ingersoll-Ross} (CIR) model
\begin{equation*}
    \ud X_t = \kappa(\theta-X_t)\ud t + \sigma \sqrt{X_t} \ud B_t,
    \qquad t\in[0,1],\qquad X_0\equiv 0.04,
\end{equation*}
with $\kappa=5$, $\theta=0.04$ and $\sigma=0.25$, and with final value
European option $f(x) = e^{-0.05}(x - 0.03)_+$. According to extensive 
simulations, the expected value was found to be approximately $\E f(X_1) =
0.01142686$.

The third model is a bivariate \emph{Heston} model
\begin{equation*}
    \begin{array}{rl}
    \ud S_t &= \mu S_t \ud t + \sqrt{V_t} S_t \ud B_t^{(1)} \\
    \ud V_t &= \kappa(\theta-V_t)\ud t + \sigma \sqrt{V_t} \ud
    B_t^{(2)},
    \end{array}
    \qquad t\in[0,1],
    \qquad
    \begin{array}{rl}
        S_0\equiv 1 \\ V_0\equiv 0,
    \end{array}
\end{equation*}
with $\mu=0.05$, $\kappa=5$, $\theta=0.04$, $\sigma=0.25$, and
where $(B_t^{(1)},B_t^{(2)})$ are coordinates of a correlated Brownian
motion with coefficient $\rho = -0.5$. The functional
$f(s,v)=f(s)=e^{-\mu}(s-1)_+$, with expected value
$\E f(S_1) = 0.10459672$ \citep{rhee-glynn,kahl-jackel}.

The fourth model is an artificial model termed \emph{modified gBM},
which has the same volatility term as 
gBM but a time-dependent drift:
\begin{equation*}
    \ud X_t = t^2X_t\ud t + \sigma X_t \ud B_t, \quad t\in[0,1],
    \qquad X_0 \equiv 1,
\end{equation*}
and $\sigma=0.1$. The target functional is the mean, which was found
to have an approximate expected value $\E X_1 = 1.395612139$. 
This last model is intended to have bigger
$|\E Y_i - \E Y|$ than the previous examples, 
highlighting the differences
between the new estimators and averages of independent estimators.

Algorithm \ref{alg:single-term} summarises an implemention of the
single-term and coupled-sum estimators $Z^{(n)}_\str$, $Z^{(n)}_\sys$,
$Z^{(n)}_{\sumsymb,\str}$ and $Z^{(n)}_{\sumsymb,\sys}$ in the SDE
context (see Section \ref{sec:mlmc}). The random variables
$(R^{(1)},\ldots,R^{(n)})$ may be constructed as in Definition
\ref{def:stratified} or Definition \ref{def:systematic}, for
$Z^{(n)}_{\uarg,\str}$ or $Z^{(n)}_{\uarg,\sys}$, respectively.
For $Z^{(n)}_{\uarg,\res}$, one may construct 
$R^{(1)},\ldots,R^{(n)}$ as in the proof of Proposition \ref{prop:consistency}
in Appendix \ref{app:proof-prop-consistency}. The independent sum
estimators may be implemented similarly as the coupled sum, 
by interchanging the lines \ref{line:bm-sim} and
\ref{for-single-coupled}.
\begin{algorithm}
\caption{Unbiased estimators in the SDE context}
\label{alg:single-term} 
\begin{algorithmic}[1]
\Function{Unbiased}{$(p_i)_{i\ge 1}$, $R^{(1)},\ldots,R^{(n)}$}
  \State $Z\gets 0$
  \For{$i=1,\ldots,\max_j R^{(j)}$}
  \State $w_i \gets 
  \begin{cases}
      (n p_i)^{-1} &\text{if single term} \\
      \big[n (1-\sum_{j=1}^{i-1} p_j)\big]^{-1}
       &\text{if coupled sum}
  \end{cases}$
  \EndFor
  \For{$j = 1,\ldots,n$} 
  \State $L \gets \begin{cases}
  \{R^{(j)}-1\}& \text{if single term} \\
  \{0,\ldots,R^{(j)}-1\}& \text{if coupled sum}
  \end{cases}$
  \State 
  \label{line:bm-sim} Simulate Brownian path
  $B$ at mesh $\tau_{R^{(j)}-1}$.
  \For{$\ell \in L $} \label{for-single-coupled}
  \State \label{line:y-plus} 
  $Y_+ \gets f(X_{2^\ell}^{(\tau_{\ell})})$
  \State $Y_- \gets \begin{cases}
  0,& \ell=0 \\
  f(X_{2^{\ell-1}}^{(\tau_{\ell-1})}),&\ell>0
  \end{cases}$ 
  \State // where $X_{2^\ell}^{(\tau_{\ell})}$ and
  $X_{2^{\ell-1}}^{(\tau_{\ell-1})}$
  are based on $B$
  \State $Z \gets Z + w_{\ell+1} (Y_+ - Y_-)$
  \EndFor
  \EndFor
 \State \Return $Z$
\EndFunction
\end{algorithmic}
\end{algorithm} 
The C++ source code of the implementation developed for the tests 
is available at \url{https://bitbucket.org/mvihola/unbiased-mlmc}.

In all but the Heston model, the standard Milstein scheme 
described in Section \ref{sec:mlmc} was employed. The antithetic
truncated Milstein scheme proposed by
\cite{giles-szpruch} was applied with the Heston model. 
We considered single term, independent sum and coupled sum estimators.
With the first two, 
the distributions $(p_i)_{i\ge 1}$ were set to approximately
optimal in each model as discussed in Section \ref{sec:cost}: $p_1,\ldots,p_m$ were
set empirically to minimise variance, and the tail probability
$P_\mathrm{tail}$ 
was calculated from prior simulated data.
The tail distribution $p_i$ for
$i>m$ was 
geometric, $p_i = P_\mathrm{tail} (1-2^{-\gamma})2^{-\gamma(i-m)}$,
with the parameter $\gamma$ set in all cases to $1.5$.
This provided a very good fit with the empirical optimal distribution
$\beta_i/\sum_{j\ge 1}\beta_j$ with
$m=1$ in case of gBM and $m=6$ with all other examples;
see also theoretical
results on the Milstein scheme \citep{kloeden-platen,jentzen-kloeden-neuenkirch} 
and the antithetic truncated Milstein \citep{giles-szpruch}.
In the coupled sum estimator, the differences $(\Delta_i)_{i\ge
1}$ were all based on a single Brownian trajectory. 
The optimisation was based on estimators 
of $\var(Y-Y_i)\approx \var(Y_m - Y_i)$, where $Y_m$ 
corresponded to discretisation with a mesh of $2^{13}$.
An optimal subsequence was found 
using the dynamic programming algorithm of
\cite{rhee-glynn}.

Figure \ref{fig:res-all} shows results based on averages of $10^5$
independent runs of each algorithm in each model.
\begin{figure} 
    \begin{center} 
    \includegraphics{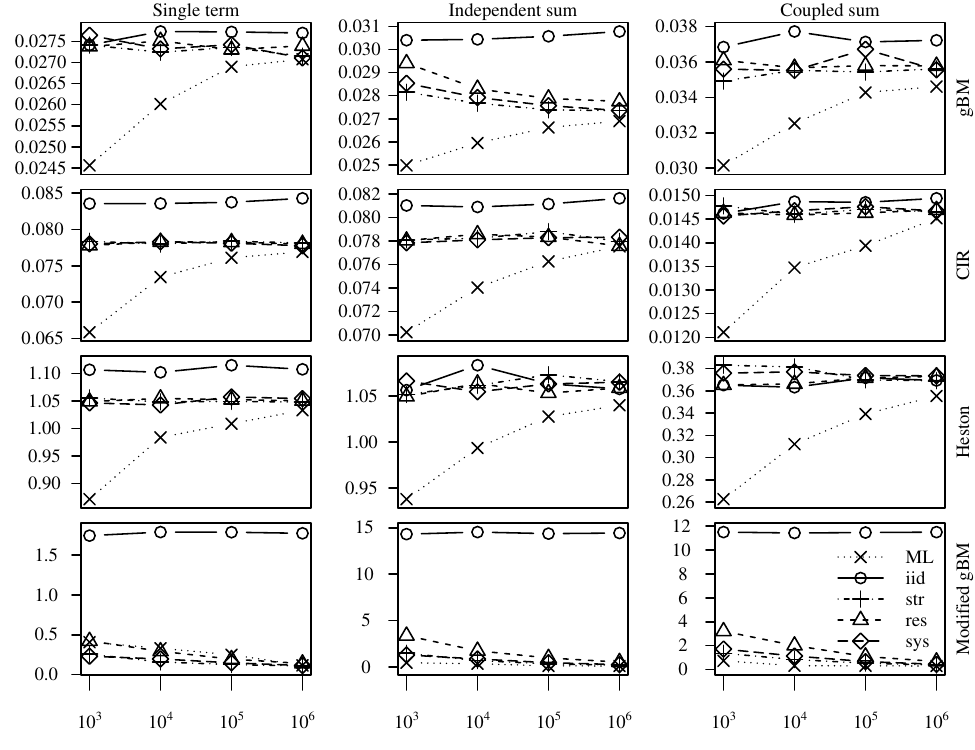}
    \end{center}
    \caption{Estimated IRE over $10^5$
      replications of estimators with $n=10^3$, $10^4$, $10^5$ and $10^6$.} 
   \label{fig:res-all}
\end{figure}
The graphs in Figure \ref{fig:res-all} show the estimated inverse
relative efficiency (IRE) of the methods: 
the average cost multiplied with average square
deviation from the ground truth values as given above. According to the
theory, this quantity is constant with independent averages, and with 
residual and stratified sampling, it converges to the same limit as
the corresponding MLMC estimator. Table \ref{tab:res} shows the
corresponding numerical values with $n=10^6$.
\begin{table} 
    \caption{Estimated IREs in the experiments with
      $n=10^6$.} 
\begin{center}
    \scriptsize
    \begin{tabular}{lllllllllllll}
\toprule
 & \multicolumn{3}{c}{gBM} 
 & \multicolumn{3}{c}{CIR} 
 & \multicolumn{3}{c}{Heston} 
 & \multicolumn{3}{c}{Modified gBM}  \\
 \cmidrule(lr){2-4}
 \cmidrule(lr){5-7}
 \cmidrule(lr){8-10}
 \cmidrule(lr){11-13}
 & Single & I.sum & C.sum
 & Single & I.sum & C.sum 
 & Single & I.sum & C.sum 
 & Single & I.sum & C.sum \\
\midrule
$Z_{*,\ml}^{(n)}$  & 0.0271 & 0.0269 & 0.0346 
                   & 0.0769 & 0.0775 & 0.0145 
                   & 1.0331 & 1.0401 & 0.3555 
                   & 0.1149 & 0.1182 & 0.3024 
                   \\
$Z_{*,\iid}^{(n)}$        & 0.0277 & 0.0308 & 0.0372 
                   & 0.0843 & 0.0816 & 0.0149 
                   & 1.1074 & 1.0579 & 0.3693 
                   & 1.7757 & 14.489 & 11.482 
                   \\
$Z_{*,\str}^{(n)}$ & 0.0271 & 0.0274 & 0.0356 
                   & 0.0782 & 0.0780 & 0.0147 
                   & 1.0479 & 1.0658 & 0.3691
                   & 0.1061 & 0.2400 & 0.3819 
                   \\
$Z_{*,\res}^{(n)}$ & 0.0274 & 0.0278 & 0.0358 
                   & 0.0780 & 0.0776 & 0.0147 
                   & 1.0503 & 1.0589 & 0.3730 
                   & 0.1368 & 0.5168 & 0.6809 
                   \\
$Z_{*,\sys}^{(n)}$ & 0.0271 & 0.0274 & 0.0355
                   & 0.0778 & 0.0783 & 0.0147
                   & 1.0543 & 1.0652 & 0.3735
                   & 0.1119 & 0.2878 & 0.4476
                   \\
\bottomrule
\end{tabular} 
\end{center}
\label{tab:res} 
\end{table} 

The experiments appear to align well with the theoretical findings. In
all but the last example, the MLMC estimator admitted the best IRE
with small $n$, and the new estimators appear to admit performance in
between the independent and the MLMC case. The performance of the new
schemes come close to MLMC performance as $n$ increases, as verified
by the differences with $n=10^6$ reported in Table \ref{tab:res}, which
are all relatively small, except for the sum estimators in the
modified gBM example indicating still some discrepancy.

Note that the MLMC implemented in the experiments is the `idealised'
version involving same pre-determined allocation strategy as unbiased
estimators (Description \ref{def:mlmc}). This explains the 
discrepancy between the MLMC IRE reported here and by
\cite{rhee-glynn}, who employ the original version of the MLMC
 \citep{giles-or}. The findings of \cite{rhee-glynn}
suggest that unbiased estimators applied with stopping rules may
sometimes improve upon the original MLMC.

The differences in performance of single term and independent sum
estimators with gBM, CIR and Heston examples are all relatively small,
As $n$ increases, the IREs of the new single term and sum estimators
become negligible, as anticipated by the theory.
The coupled sum estimator appears to admit greater efficiency with CIR
and Heston examples, but the differences between independent
average estimators and the new estimators are small. The modified gBM
example demonstrates that the new estimators can be significantly more
efficient. The numerical values shown in Table \ref{tab:res} indicate
a 13--16 fold increase in terms of relative efficiency with the new
single term estimators and similar performance with MLMC. The increase
is 17--60 fold with sum estimators. In both cases, the stratified and
systematic sampling estimators appear to perform slightly better than
the residual sampling estimator.


\section{Discussion} 
\label{sec:conclusions} 

This paper presented a general framework for unbiased MLMC estimation,
which admits previous debiasing schemes as special cases, and
accomodates new, lower variance unbiased estimators. The proposed
stratified sampling and residual sampling schemes are promising
classes of estimators, as they enjoy good theoretical behaviour---they
can not only improve on averages of independent estimators (Proposition
\ref{prop:consistency} and Proposition \ref{prop:cost-new}), but also
can have a significant gain in efficiency as illustrated in the
experiments. Indeed, the stratified and residual sampling estimators
can be made arbitrarily close to MLMC in efficiency under general
assumptions (Theorem \ref{thm:residual-eff} and Proposition
\ref{prop:mlmc-expected-cost}), highlighting the close connection 
between the MLMC and the debiasing schemes, and showing that 
unbiasedness may be achieved with virtually no sacrifice on efficiency.
Unbiasedness is an important quality of estimators, when 
employed as part of stochastic optimisation algorithms
\citep{borkar-sa,kushner-yin-sa}, or in a `compound sampling' context
\citep{vihola-helske-franks}. It also enables rigorous stopping rules,
which can lead to benefits over MLMC stopping rules
\citep[§4]{rhee-glynn}. 

While the theory presented in this paper does not give guarantees on
the limiting efficiency of the systematic sampling, it is expected to
behave often similar to stratified and residual sampling schemes, as
illustrated by the experiments. However, as stratified and residual
sampling enjoy good theoretical properties, and because systematic
sampling appears to perform comparatively in practice, stratified or
residual sampling are recommended as safer alternatives. The empirical
evidence from the numerical experiments in Section \ref{sec:numerical}
suggests that stratified sampling might sometimes perform slightly
better than residual sampling. This, together with the straightforward
implementation of stratified sampling, makes it appealing for
practical purposes.

Averages of independent realisations of the single term and the 
independent sum estimators may have different efficiencies in general.
In case of residual and stratified sampling, the optimally tuned
estimators often coincide in asymptotic efficiency. In general, the
single term estimator always dominates the independent sum estimator
in terms asymptotic efficiency, rendering the single term estimator
preferable over the independent sum estimator. 
Based on the experiments, the optimally tuned coupled sum estimator
may sometimes lead into significant performance gains. 
This suggests also that dependent level estimators might be worth
considering in the MLMC context, where independent level
estimators are currently widely employed. 

Despite the potential performance gain of the coupled estimators, 
it should be noted that tuning of $(p_i)_{i\ge 1}$ requires
requires estimation of $\var(Y-Y_i)\approx \var(Y_m - Y_i)$, with $m$ large, which 
is often computationally demanding.
This is in sharp contrast with  the single term estimator, where $\var(\Delta_i)$ are
easily accessible and inexpensive for moderate $i$. 
The simplicity of the single term estimator optimisation criterion
suggests, as discussed in Section \ref{sec:cost}, an algorithm which
automatically tunes the distribution $(p_i)_{i\ge 1}$ during repeated
simulation of $Z_\str^{(n)}$ or $Z_\res^{(n)}$, based on earlier
observed values for $\Delta_i^{(j)}$. 
\cite{haji-ali-nobile-von-schwerin-tempone} suggest methods for
finding optimal discretisation hierarchies in the MLMC context, which
could also be explored in the debiasing context.

The MLMC literature provides many other potential further research
topics.  It is possible to employ
essentially all techniques developed in the context of MLMC with
debiasing. These include, for instance, quasi-Monte Carlo
\citep{giles-waterhouse,dick-kuo-sloan}, adaptive time steps 
\citep{hoel-schwerin-szepessy-tempone} or adaptive importance sampling
scheme based on a drifted Brownian motion \citep{kebaier-lelong}.
It is yet unclear how well the unbiased estimators can compete
with MLMC in scenarios with slower than canonical rate of convergence.
This was investigated by \cite{rhee-glynn} and elaborated in 
\cite{zheng-blanchet-glynn}, but quantifying the effect in the context
of the new estimators requires further research. Recent work of
\cite{zheng-glynn} suggests an infinite stratification approach,
which is slightly different what was proposed here.


\section*{Acknowledgements} 

This research was supported by the Academy of Finland
(grants 274740 and 284513), and computational resources were provided
by CSC, the IT Center for Science, Finland. The author wishes to thank
Christel Geiss, Stefan Geiss, Peter Glynn, Chang-Han Rhee, Raul
Tempone and Anni Toivola for useful discussions and remarks, and the
associate editor and the reviewers of \emph{Operations Research} 
for helpful comments, and
for the suggestion to consider dependent randomisation, which led to the 
developments in Section
\ref{sec:stopping}.



\appendix

\section{Subsequence lemma} 
\label{app:single-sum} 

The following lemma is due to an argument by \cite{rhee-glynn}.
\begin{lemma} 
\label{lem:subsequence} 
Suppose 
$\E(Y_i - Y)^2 \to 0$ as $i\to\infty$.
Then, there exists strictly
increasing  $(m_k)_{k\ge 1}$ such that for all $k\ge 1$,
\[
    \E ( Y_i - Y_{m_{k}})^2
    \le 4 
    \E( Y_i - Y)^2
    \qquad\text{for all $m_1 \le i \le m_{k}$.}
\]
\end{lemma}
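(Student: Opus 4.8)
The plan is to reduce the statement to a property of the deterministic sequence $a_i \defeq \sqrt{\E(Y_i-Y)^2}$, which tends to $0$ by hypothesis. The only probabilistic input is Minkowski's inequality, which gives, for any indices $i,l\ge 1$,
\[
    \sqrt{\E(Y_i-Y_l)^2} \;\le\; \sqrt{\E(Y_i-Y)^2} + \sqrt{\E(Y_l-Y)^2} \;=\; a_i + a_l .
\]
Consequently, if $(m_k)_{k\ge 1}$ is chosen so that $a_{m_l}\le a_i$ whenever $m_k\le i\le m_l$ for some $k\le l$, then $\sqrt{\E(Y_i - Y_{m_l})^2}\le 2a_i$ for all such $i$, and squaring yields exactly $\E(Y_i-Y_{m_l})^2\le 4\,\E(Y_i-Y)^2$. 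So it suffices to construct a strictly increasing $(m_k)$ with this "monotone along blocks'' property, and the natural candidate is the sequence of indices at which the running minimum of $(a_i)$ is attained.

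Concretely, I would argue in two cases. If $a_i=0$ (that is, $Y_i=Y$ almost surely) for infinitely many $i$, let $(m_k)_{k\ge 1}$ enumerate those indices in increasing order; then $\E(Y_i-Y_{m_l})^2=\E(Y_i-Y)^2$ for every $i$, and the claim is immediate. Otherwise there is $N\ge 1$ with $a_i>0$ for all $i\ge N$. Set $\nu_i\defeq\min_{N\le j\le i}a_j$ for $i\ge N$; this is non-increasing, stays strictly positive (a minimum of finitely many positive numbers), and converges to $0$ since $a_i\to 0$. Hence $\nu$ must decrease strictly infinitely often — an eventually constant $\nu$ would have to be constant and strictly positive, contradicting $\nu_i\to 0$ — so the record set $S\defeq\{i\ge N: a_i=\nu_i\}$ is infinite; enumerate it as $m_1=N<m_2<\cdots$. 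For every $l$ one then has $a_{m_l}=\nu_{m_l}=\min_{N\le j\le m_l}a_j\le a_i$ for all $N\le i\le m_l$, in particular for all $m_k\le i\le m_l$, which is precisely the property required by the reduction.

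The construction is otherwise routine; the step that takes a moment's thought is recognising that the running-minimum record indices are the right choice and then checking that there are infinitely many of them, which is where $a_i\to0$ together with positivity of $a_i$ beyond the last vanishing index is used. The only genuine case distinction, and thus the main obstacle, is the behaviour when $a_i$ vanishes: if this happens infinitely often the running-minimum construction stalls (the minimum becomes $0$ and need not be re-attained at later, positive-error indices) and one instead uses the vanishing indices directly, while if it happens only finitely often one simply starts the construction beyond the last such index, as above.
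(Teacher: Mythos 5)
Your proof is correct and follows essentially the same route as the paper's: the same case split (infinitely many indices with $\E(Y_i-Y)^2=0$ versus only finitely many), the same running-minimum record subsequence, and the same factor-of-$4$ bound (your Minkowski-then-square step is equivalent to the paper's use of $(a+b)^2\le 2a^2+2b^2$ combined with $\E(Y_{m_l}-Y)^2\le\E(Y_i-Y)^2$). You are in fact slightly more careful than the paper in verifying that the record set is infinite.
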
 
\begin{proof} 
Let $\delta_i \defeq Y_i - Y$.
If $\E \delta_i^2=0$ for infinitely many $i$, choose them
as $(m_k)_{k\ge 1}$. Otherwise, let $m_1$ be such that $\E
\delta_i^2>0$ for all $i\ge m_1$, and define recursively $m_{k+1}
\defeq \min\{ i>m_k\given \E \delta_i^2 \le \E\delta_{m_k}^2\}$.
Now, for any $k$ and any $m_{1}\le i\le m_k$,
\[
    \E(Y_{i}-Y_{m_k})^2 \le 2 \E\delta_i^2 + 2 \E\delta_{m_k}^2 
    \le 4 \E\delta_i^2. \qedhere
\]
\end{proof} 


\section{Proof of Theorem \ref{thm:general}} 
\label{app:proof-general} 
Define $Z_0\defeq 0$ and for $m\ge 1$
\[
    Z_m \defeq \sum_{i=1}^m \frac{1}{\E N_i} \sum_{j=1}^{\infty}
    \Delta_i^{(j)} \charfun{j\le N_i},
\]
then by dominated convergence
\[
    \E Z_m = \sum_{i=1}^m \frac{\E \Delta_i}{\E N_i }
    \E\bigg[\sum_{j=1}^\infty
    \charfun{j\le N_i}\bigg] = \sum_{i=1}^m \E \Delta_i
    = \E Y_m,
\]
because $\sum_{j=1}^\infty \charfun{j\le N_i} = N_i$.

For $i, k\ge 1$, by dominated convergence,
\begin{align}
\E\big[(Z_i-Z_{i-1})&(Z_k-Z_{k-1})\big]
\label{eq:single-level-diffs} \\
&= \frac{1}{\E N_i \E N_k} \E\bigg[\sum_{j,\ell=1}^\infty
\Delta_i^{(j)}\Delta_k^{(\ell)} \charfun{j\le N_i}\charfun{\ell\le N_k} 
\bigg]\nonumber\\
&= \frac{1}{\E N_i \E N_k} \bigg(\E (\Delta_i \Delta_k)
\E\bigg[\sum_{j=1}^\infty \charfun{j\le N_i}\charfun{j\le N_k}\bigg] 
\nonumber\\
&\phantom{=\frac{1}{\E N_i \E N_k}\bigg(}+ \E \Delta_i \E \Delta_k 
\E\bigg[\sum_{\stackrel{j,\ell=1}{j\neq \ell}}^\infty \charfun{j\le
  N_i}\charfun{\ell\le N_k}
\bigg]\bigg) \nonumber\\
&= \frac{\big(\E(\Delta_i\Delta_k) -
  \E\Delta_i\E\Delta_k\big)\E(N_i\wedge N_k) + \E\Delta_i\E\Delta_k\E
  [N_i N_k]}{\E N_i \E N_k},\nonumber
\end{align}
because $\sum_{j,\ell=1}^\infty \charfun{j\le N_m} \charfun{\ell\le N_k} =
N_m N_k$.

We deduce that for $0\le \ell<m$, 
\begin{align}
\E(Z_m-Z_{\ell})^2 = \sum_{i,k=\ell+1}^m 
\E\big[(Z_i-Z_{i-1})&(Z_k-Z_{k-1})\big]
= v_{\ell,m} + (\E Y_m - \E Y_\ell)^2.
\label{eq:full-level-diffs}
\end{align}
Therefore, by assumption $(Z_{m_k})_{k\ge 1}$ is Cauchy 
in $L^2$. Because $\sum_{i\ge 1}
N_i<\infty$ a.s., we have $Z_{m_k} \to Z$ a.s., and therefore $Z_{m_k}\to Z$
in $L^2$. We deduce that $\E Z = 
\lim_{k\to\infty} \E Z_{m_k} = \lim_{k\to\infty} \E Y_{m_k} = \E Y$. 
Similarly we find the expression
$\E Z^2 = \lim_{k\to\infty} \E
Z_{m_k}^2 = \lim_{k\to\infty} [v_{0,m_k} + (\E Y_{m_k})^2 ]$.
\qed


\section{Proof of Proposition \ref{prop:consistency}} 
\label{app:proof-prop-consistency} 

Let us consider first $Z_\str^{(n)}$, which is an average of
independent random variables $(X_i^{(1)})_{i=1,\ldots,n}$ which
follow, respectively, 
the conditional distribution of the single term estimator $Z^{(1)}$, 
given the uniform random variable $U$ generating
$R=F^{-1}(U)$ takes
value in $I_i \defeq \big(\frac{i-1}{n},\frac{i}{n}\big)$.  The
desired variance bound follows from Lemma \ref{lem:strat}
\eqref{item:prop-alloc} applied with $(X_i^{(1)})$, $\ell_i = 1$,
$\ell = n$ and $q_i = n^{-1}$. The sum estimator
$Z_{\sumsymb,\str}^{(n)}$ is similarly stratified version of the average
of $n$ independent sum estimators. 

The systematic sampling estimators $Z_\sys^{(n)}$ and
$Z_{\sumsymb,\sys}^{(n)}$, are averages as above, but with the
difference that the uniformly distributed random variables on $I_i$
which determine $X_i^{(1)}$ are not independent. We may apply Lemma
\ref{lem:strat} \eqref{item:strat-pessimistic} which gives the
(pessimistic) upper bound on the variance.

The residual sampling $Z_\res^{(n)}$
and $Z_{\sumsymb,\res}^{(n)}$ may also be seen as
stratification of $Z^{(1)}$ and $Z_\sumsymb^{(1)}$, 
but instead of considering $R = F^{-1}(U)$, 
we now let $R = g(U)$, where
\[
    g(u) \defeq \begin{cases}
    \min\{k\in\N\given (r/n)\sum_{i\ge 1} p_i^* \ge u\},
      & u\in(0,r/n), \\
    \min\{k\in\N\given n^{-1} \sum_{i\ge 1} n_i \ge u-n/r\}, &
      u\in[r/n,1).
    \end{cases}
\]
It is direct to verify that $g(U)\sim (p_i)_{i\ge 1}$.
Let $X_1^{(1)},\ldots,X_r^{(1)}$ be independent 
random variables with conditional distribution
of $Z^{(1)}$ (resp. $Z_\sumsymb^{(1)}$) given $U\in I_i$, respectively, 
and let $X_{r+1}^{(1)},\ldots,X_{r+1}^{(n-r)}$ 
be, similarly, independent conditional on $U \in 
\big(\frac{r}{n},1\big)$. We may apply 
Lemma \ref{lem:strat}
\eqref{item:prop-alloc} with $\ell_1=\cdots=\ell_r=1$,
$\ell_{r+1}=n-r$, $\ell=n$, $q_1=\cdots=q_r = n^{-1}$ and
$q_{r+1}=(n-r)/n$.
\qed


\section{Proof of Theorem \ref{thm:residual-eff}}
\label{app:proof-eff} 

Consider \eqref{item:residual-eff-single}, and
let $N_i$ correspond to $Z_\res^{(n)}$.
We have for any $i,k\ge 1$
\begin{align*}
    \cov( N_i, N_k)
    &= \cov(N_i^*, N_k^*) =
    r \big(p_i^*\charfun{i=k} -  p_i^* p_k^*\big),
\end{align*}
so $|\cov(N_i,N_k)|\le r$. In case  of $Z_\str^{(n)}$, it is not
difficult to check that the number of strata $I_j\defeq 
\big(\frac{j-1}{n},\frac{j}{n}\big)$ partially
overlapping $i$, that is, such that $i\in F^{-1}(I_j)$ but
$F^{-1}(I_j)\setminus\{i\}\neq \emptyset$, is at most two,
so $\var(N_i)\le 1$, and consequently $|\cov(N_i,N_k)|\le 1$.

Denote $Z_*^{(n,m)} \defeq n^{-1} \sum_{i=1}^m
p_i^{-1} \sum_{i=1}^{N_i} \Delta_i^{(j)}$  for $m\ge 1$,
where $(N_i)_{i\ge 1}$
correspond either to the residual sampling or stratified sampling
scheme. We have (cf.~Example \ref{ex:single})
\begin{align*}
\bigg| \var(Z_*^{(n,m)})
- \frac{1}{n}\sum_{i=1}^m \frac{\var(\Delta_i)}{p_i}\bigg|
&\le 
  \frac{r}{n^2} \sum_{i,k=1}^m 
    \frac{|\E\Delta_i\E\Delta_k| }{p_i p_k}.
\end{align*}
Because $r/n\to 0$ as $n\to\infty$,
we deduce that
$\lim_{n\to\infty} n \var(Z_*^{(n,m)})
= \sum_{i=1}^m
\var(\Delta_i)/p_i$. 

On the other hand, 
$\var(Z_*^{(n,m)}-Z_*^{(n,\ell)})$
is no greater than the corresponding variance of single term estimators
\eqref{eq:v-single},
and therefore for $1<\ell<m$,
\[
  n \var(Z_*^{(n,m)}-Z_*^{(n,\ell)})
  \le  \sum_{i=\ell+1}^m \frac{\E \Delta_i^2}{p_i} - (\E Y_m - \E
  Y_\ell)^2.
\]
We deduce that for any $m_0>1$, due to independence,
\begin{align*}
     |n \var(Z_*^{(n)})
    -\sigma_\infty^2 | 
    &=  |n \lim_{m\to\infty} \var(Z_*^{(n,m)})-\sigma_\infty^2| \\ 
    &\le   |n 
    \var(Z_*^{(n,m_0)})-\sigma_\infty^2|
    +  \limsup_{m\to\infty} 
    n \var(Z_*^{(n,m)}-Z_*^{(n,m_0)}).
\end{align*}
Both terms on the right 
can be made arbitrarily small by choosing $m_0$ large enough
and letting $n\to\infty$.

An easy calculation shows that
\begin{align}
    n\var(Z_\ml^{(n)}) 
    &= n \sum_{i=1}^{m_n} \frac{\charfun{n_i>0}}{n_i}\var(\Delta_i)
    \nonumber\\
    &= \sum_{i=1}^{m_n} \frac{\var(\Delta_i)}{p_i}
    + \sum_{i=1}^{m_n} \bigg( \frac{n\charfun{n_i>0}}{n_i} - \frac{1}{p_i}\bigg)
    \var(\Delta_i),
    \label{eq:z-ml}
\end{align}
where the first term converges to $\sigma_\infty^2$ as $n\to\infty$.
For the latter, observe that 
\[
    0\le \frac{n}{n_i} - \frac{1}{p_i} = \frac{np_i - n_i}{n_i p_i}
    \le \frac{1}{n_i p_i} \le \frac{1}{p_i},
\]
and $n_i\to\infty$ as $n\to\infty$, so by dominated convergence the
last sum in \eqref{eq:z-ml} vanishes.

Let us then turn into \eqref{item:residual-eff-sum}.
Let us calculate first
\begin{align*}
    \var(Z_{\sumsymb,\ml}^{(n)})
    &= \sum_{i,k=1}^{\tilde{m}_n}
    \frac{\cov(\Delta_i,\Delta_k)\hat{n}_{i\vee k}}{
      \hat{n}_i \hat{n}_k} \\
    &=\sum_{i=1}^{\tilde{m}_n}\bigg(
    \frac{\var(\Delta_i)}{\hat{n}_i} + 2 
    \sum_{k=i+1}^{\tilde{m}_n}
    \frac{
    \cov(\Delta_i,\Delta_k)}{\hat{n}_i}\bigg) \\
  &= \sum_{i=1}^{\tilde{m}_n}
    \frac{\var(\Delta_i)+ 2\cov(\Delta_i,Y_{\tilde{m}_n}-Y_i)}{\hat{n}_i},
\end{align*}
and because
\[
    \var(\Delta_i)+ 2\cov(\Delta_i,Y_{\tilde{m}_n}-Y_i) =
  \var(Y_{\tilde{m}_n}-Y_{i-1})-\var(Y_{\tilde{m}_n}-Y_i)\eqdef
  \xi_i^{(\tilde{m}_n)},
\]
we obtain
\begin{equation}
    n \var(Z_{\sumsymb,\ml}^{(n)})
    = \sum_{i=1}^{\tilde{m}_n}
    \frac{\xi_i^{(\tilde{m}_n)}}{\tilde{p}_i} + \sum_{i=1}^{\tilde{m}_n}\bigg(
    \frac{n}{\hat{n}_i} -
    \frac{1}{\tilde{p}_i}\bigg)\xi_i^{(\tilde{m}_n)}.
    \label{eq:mlmc-var}
\end{equation}
Note that $|\xi_i^{(\tilde{m}_n)}|\le \var(Y_{\tilde{m}_n}-Y_{i-1})
\le 
2\var(Y-Y_{\tilde{m}_n})+2\var(Y-Y_{i-1})$, so
\[  
\frac{|\xi_i^{(\tilde{m}_n)}|}{\tilde{p}_i}\le 4 \sup_{k\ge i}
\frac{\var(Y-Y_k)}{\tilde{p}_k}.
\]
The first sum in \eqref{eq:mlmc-var} therefore converges to 
$\sigma_{\sumsymb,\infty}^2$ as
$n\to\infty$ and 
the latter sum in \eqref {eq:mlmc-var} converges to zero as above.

Consider then $Z_{\sumsymb,\res}^{(n)}$. As above,
$\cov(\tilde{N}_i,\tilde{N}_k)
    =\cov(\tilde{N}^*_i, \tilde{N}^*_k) = 
    r (\tilde{p}_{i\vee k}^*
    - \tilde{p}_i^*\tilde{p}_k^*)$.
In case of $Z_{\sumsymb,\str}^{(n)}$,
at most one stratum $F^{-1}(I_j)$ contains some $\ell<i$ and $i$,
so $\var(\tilde{N}_i)\le 1/4$. Therefore,
$|\cov(\tilde{N}_i,\tilde{N}_k)|\le r$ for both schemes,
and the variance of 
$Z_*^{(n,m)}
\defeq n^{-1} \sum_{i=1}^m \tilde{p}_i^{-1} \sum_{j=1}^{\tilde{N}_i} 
\Delta_i^{(j)}$ admits the bound
\begin{align*}
\bigg|    \var(Z_*^{(n,m)})
- \frac{1}{n}\sum_{i=1}^m \bigg(
    \frac{\var(\Delta_i)}{\tilde{p}_i}
    + 2 \sum_{k=i+1}^m &\frac{\cov(\Delta_i,\Delta_k)}{\tilde{p}_i}
    \bigg)
    \bigg|
    \le 
    \frac{r}{n^2}\sum_{i,j=1}^m \frac{|\E \Delta_i \E\Delta_k|}{
      \tilde{p}_i\tilde{p}_k} 
\end{align*}
The second term on the left equals 
$n^{-1} \sum_{i=1}^m \xi_i^{(m)}/\tilde{p}_i$, and
clearly $\var(Z_{\sumsymb,\res}^{(n,m)})
\to  \sum_{i=1}^m
\xi_i^{(m)}/\tilde{p}_i$ as $n\to\infty$. Now, take $(m_k)_{k\ge 1}$
from Lemma \ref{lem:subsequence}, then stratification implies that
for $k,j\ge 1$,
\[
    n \var(\tilde{Z}_*^{(n,m_{k+j})}
    - \tilde{Z}_*^{(n,m_{k})})
    \le 4 \sum_{i=m_k+1}^{m_{k+j}} \frac{\E(Y_{i-1}-Y_{m_k})^2}{\tilde{p}_i}
    + (\E Y_{m_{k+j}}-\E Y_{m_k})^2.
\]
We conclude as above by writing for any $k\ge 1$,
\begin{align*}
|n \var(\tilde{Z}_*^{(n)}) -
\sigma_{\sumsymb,\infty}^2| & \le |n
\var(\tilde{Z}_*^{(n,m_{k})})-\sigma_{\sumsymb,\infty}^2|
+ \limsup_{j\to\infty} 
 n \big|\var (\tilde{Z}_*^{(n,m_{k+j})}
-\tilde{Z}_*^{(n,m_{k})}) \\
&\phantom{\le |n
\var(\tilde{Z}_*^{(n,m_{k})})- }
  + 2  \cov\big(
(\tilde{Z}_*^{(n,m_{k+j})}
-\tilde{Z}_*^{(n,m_{k})}),
\tilde{Z}_*^{(n,m_{k})})\big)\big|.
\end{align*}
Because $n |\cov(A,B)|\le \sqrt{n\var(A)n\var(B)}$, all terms on the
right can be made arbitrarily small by 
by choosing
$m_k$ large enough and letting $n\to\infty$.

Finally, the proof of 
\eqref{item:eff-independent-sum} follows similarly as
\eqref{item:residual-eff-single} 
and  \eqref{item:residual-eff-sum}.
\qed


\section{Proof of Theorem \ref{thm:dependent-randomisation}}
\label{app:proof-dependent-randomisation} 
Consider \eqref{item:stopping-gen} and notice that
\begin{align*}
    \E\bigg(\sum_{j=1}^n |\Delta_i^{(j)}|
    \charfun{j\le N_i}\;\bigg|\;\F_{i-1}\bigg)
    &\le C_i \sum_{j=1}^n \P(j\le N_i\mid \F_{i-1}) 
    \le C_i \E(N_i\mid \F_{i-1}).
\end{align*}
Denote $S_i\defeq \E(N_i\mid \F_{i-1})^{-1} \sum_{j\ge 1}
\Delta_i^{(j)}\charfun{j\le N_i}$, then
$\E|S_i|\le c_i$ and
by dominated convergence a similar calculation yields
\[ 
    \E\bigg(
  \frac{S_i}{\E(N_i\mid \F_{i-1})}\;\bigg|\; \F_{i-1}\bigg) = 
  \lim_{n\to\infty} \E\bigg(\sum_{j=1}^n
  \frac{(\E Y_i - \E Y_{i-1}) \P(j\le N_i\mid \F_{i-1})}{\E(N_i\mid
    \F_{i-1})}\bigg) = \E Y_i - \E Y_{i-1},
\] 
which leads to $\E Z_m = \E Y_m$. 
For the latter claim, note that only finitely many
of $\{N_i\}_{i\ge 1}$ are non-zero, so $Z$ is well-defined,
and the result
follows by the dominated
convergence theorem.

The statement \eqref{item:stopping-indep} is a generalisation of
Theorem \ref{thm:general}, and the proof follows similarly. Indeed,
Condition \ref{cond:stopping-dependence2}
\eqref{item:general-consistency} 
and \eqref{item:general-bound} hold by independence, with
$c_i = C_i = \E |\Delta_i|$, so part
\eqref{item:stopping-gen} implies $\E Z_m = \E Y_m$. 
Denoting $\F_{i,k} \defeq
\sigma(\F_{i-1},\F_{k-1})$, a straightforward calculation similar
to \eqref{eq:single-level-diffs} yields
\begin{align*}
    \E\big[(Z_i-Z_{i-1})(Z_k-Z_{k-1})\bigmid \F_{i,k}\big]
    = \frac{\cov(\Delta_i,\Delta_k)\E(N_i\wedge N_k\mid \F_{i,k})
      + \E\Delta_i\E\Delta_k \E(N_i N_k\mid \F_{i,k})}{\E(N_i\mid
      \F_{i-1})\E(N_k\mid \F_{k-1})},
\end{align*}
and as in \eqref{eq:full-level-diffs}, 
$\E(Z_m-Z_{\ell})^2 
= v_{\ell,m} + (\E Y_m - \E Y_\ell)^2$.
\qed 

\end{document}